\documentclass[11pt]{article}
\usepackage{doi}

\usepackage[T1]{fontenc}
\usepackage[utf8]{inputenc}
\usepackage[margin=1in]{geometry}
\usepackage{microtype}
\usepackage{authblk}

\usepackage{amsmath,amssymb,amsfonts,amsthm,mathtools}

\usepackage{graphicx}
\graphicspath{{images/}}
\usepackage{booktabs}
\usepackage{multirow}
\usepackage{array}
\usepackage{tabularx}
\usepackage{longtable}
\usepackage{rotating}
\usepackage{float}
\usepackage{tikz}
\usetikzlibrary{shapes.geometric}
\usepackage{xcolor}
\usepackage{enumitem}

\usepackage[numbers,sort&compress]{natbib}
\usepackage{url}

\theoremstyle{definition}

\newtheorem{condition}{Condition}
\theoremstyle{remark}

\usepackage{algorithm}
\usepackage{algpseudocode}
\theoremstyle{plain}
\newtheorem*{theorem*}{Theorem}

\DeclareMathOperator*{\argmin}{arg\,min}

\usepackage{authblk}

\newcommand\equalcontrib[1]{%
  \begingroup
  \renewcommand\thefootnote{}\footnote{#1}%
  \addtocounter{footnote}{-1}%
  \endgroup
}

\begin{document}

\title{Theoretical Properties of Multivariate Random Forest in Feature Selection and its Application to Facial Morphology-Gene Detection}

\author[1]{Yangsheng Wang $^\dagger$}
\author[1]{Samruddhi Thakar$^\dagger$}
\author[1]{Anton Schick}
\author[1]{Guifang Fu\thanks{Corresponding author: gfu@binghamton.edu}}

\affil[1]{%
Department of Mathematics and Statistics, Binghamton University,
Binghamton, New York 13902, USA}

\maketitle
\equalcontrib{$^\dagger$These authors contributed equally to this work.}
\begin{abstract}
This work establishes a theoretical foundation for joint feature selection with multivariate outcomes, positioning the permutation-based variable importance measure (PVIM) of multivariate random forests (MRF) as a principled tool for high-dimensional feature selection. We establish the first consistency guaranty for MRF, showing that it retains all truly \textcolor{black}{influential} features with probability tending to one as the sample size \textcolor{black}{grows to infinity} under mild regularity conditions. Incomplete U-statistics is employed to incorporate three layers of randomness: subsampling of subjects for training each tree, \textcolor{black}{subsampling of features} at each split, and permutation of each feature for the out-of-bag (OOB) samples. Unlike independence-based screening that evaluates each feature in isolation, PVIM \textcolor{black}{is a joint screening approach that} accounts for multicollinearity, nonlinear, high-order interactions, and subject heterogeneity via ensemble aggregation. \textcolor{black}{Moreover}, we demonstrate the practical utility of MRF through a genome-wide association study (GWAS) of human facial morphology (with 2,342 \textcolor{black}{subjects} and 453,273 SNPs), where MRF identifies several novel loci and interaction hubs that extend prior findings. Extensive simulations show that \textcolor{black}{MRF accurately identifies truly influential signals while producing parsimonious feature sets with \textcolor{black}{well controlled} false selection rates}, outperforming canonical correlation analysis (CCA) and several other independence multivariate screening approaches. In addition, we also propose a novel simulation framework, including \textcolor{black}{image outcomes}, that more closely mimic the intricate nature of real-world data and provide rigorous testbeds for machine learning research.

\end{abstract}

\textbf{Keywords:}
Multivariate Random Forests, Permutation-based Variable Importance Measure, Sure Screening, Feature Selection, GWAS

\section{Introduction}
Feature selection has been widely recognized as a fundamental challenge across diverse disciplines, particularly in high-dimensional settings where the number of features far exceeds the number of observations \citep{buhlmann2011statistics}. Selecting a subset that retains only the truly \textcolor{black}{influential} features is essential for enhancing interpretability, reducing computational costs, mitigating overfitting, improving predictive performance, and discovering novel scientific insights \citep{saeys2007review}. Over the past several decades, extensive research has produced a broad spectrum of feature selection techniques \citep{10.5555/944919.944968,li2017feature, BARBIERI2024123667}. Classical feature selection approaches for multivariate outcomes are typically grouped into three major categories: filter methods, wrapper methods, and embedded methods \citep{10.5555/1208773, CHANDRASHEKAR201416}.

Filter methods evaluate each feature individually based on its statistical association strength with the outcome vector, using criteria such as information-theoretic measures or dependence scores \citep{Timme2013, Ding2003}. Among these, sure independence screening (SIS) and its variants have gained widespread attention, offering great computational efficiency and convenience for ultrahigh-dimensional feature selection with theoretical guarantees \citep{fan2008sure}. For instance, \citet{Li2012} proposed distance correlation–based SIS (DC-SIS) to capture nonlinear associations. \citet{Zhao2022} further extended it to a multivariate rank distance correlation approach (MrDc-SIS), achieving model-free and distribution-free properties. \citet{Pan2019} developed BCor-SIS, leveraging ball correlation to enhance robustness under mild distribution assumptions. \citet{Liu2020} combined projection correlation with knockoff features (PC-Screen) to control the false selection rate, while \citet{Guo2022} improved ranking stability by modifying the weighting scheme of distance correlation (SC-SIS). \textcolor{black}{Another well-known filter method is canonical correlation analysis (CCA), 
which constructs linear combinations of features and outcomes that maximize their 
correlation \citep{80ba5479-167b-33ca-b5ea-cdb01c308499,hardoon2004canonical, Härdle2024}. 
In feature screening settings, CCA is typically applied to each feature individually, 
and statistical significance can be assessed using Rao’s $F$-approximation to obtain 
a $p$-value measuring the association strength between each feature and the multivariate 
outcome vector} \citep{johnson2007applied, rencher2012methods}. 
CCA has been widely used in genome-wide association studies (GWAS) to detect 
significant single nucleotide polymorphisms (SNPs) associated with various traits \citep{ferreira2009multivariate,seoane2014canonical, osborne2024multivariate}. In summary, filter methods are attractive for their computational efficiency and scalability in ultrahigh-dimensional settings. They provide practical convenience by relying solely on dependence measures, without requiring explicit model specification. However, this simplicity comes with trade-offs: by evaluating each feature in isolation, filter methods assume independence and thus fail to account for \textcolor{black}{joint effects}, correlations, and interactions among features \citep{saeys2007review,JMLR:v24:21-1046}.

Wrapper methods are iterative, model-based feature selection approaches that evaluate a large number of candidate subsets by comparing the predictive performance of the corresponding models. Common implementations rely on greedy algorithms such as forward selection, backward elimination, or stepwise selection \citep{kohavi1997wrappers, 10.5555/944919.944968}, with evaluation criteria including multivariate $R^2$, mean squared error (MSE), AIC, or BIC, etc \citep{vandenBurg1988}. By exploiting model structure directly, wrapper methods can capture joint effects among features that are overlooked by filter methods \citep{kohavi1997wrappers, 10.5555/944919.944968}. However, exhaustive evaluation of all possible subsets requires $2^p$ combinations, which is infeasible except in low-dimensional settings \citep{blum1997selection}. To mitigate this, more advanced algorithms, such as sequential floating search, branch-and-bound, and evolutionary strategies, have been proposed to explore the subset space more efficiently. Despite these advances, \textcolor{black}{wrapper methods remain computationally intensive and prone to overfitting because repeatedly evaluating a large number of candidate subsets with overlapping features on the same dataset results} in selection bias \citep{CHANDRASHEKAR201416, saeys2007review, kraev2024shapselect, JMLR:v24:21-1046, pudjihartono2022review, reunanen2003overfitting}. Moreover, the features selected \textcolor{black}{by wrapper methods} are tightly coupled to the model employed during selection, which limits their transferability to different structures \citep{saeys2007review, pudjihartono2022review}.

Embedded methods integrate feature selection directly into the model learning process, thereby combining the advantages of both filter and wrapper approaches. Unlike filter methods, which evaluate each feature individually and independently of a predictive model, and wrapper methods, which assess candidate subsets through exhaustive searches, embedded methods perform feature selection as part of the model learning itself. This allows the algorithm to automatically identify and retain \textcolor{black}{influential} features based on their contributions to predictive performance \citep{10.5555/944919.944968}. Compared with wrappers, embedded methods are generally more computationally efficient and thus more scalable to high-dimensional datasets. Compared with filters, embedded methods achieve better predictive accuracy because the importance of each \textcolor{black}{feature is assessed in conjunction with all other features based on model performance on test data}, rather than in isolation. A prominent example is the multivariate LASSO, which applies regularization penalties to shrink the coefficients of weakly associated features toward zero, thereby enabling simultaneous feature selection and improved predictive accuracy \citep{Simon2013, Raess2016}. Despite these advantages, LASSO-based methods \textcolor{black}{are restricted} to specific linear model assumptions. They may struggle to capture complex relationships among features, particularly when the regularization parameters are not carefully tuned.

While existing methods have laid a solid foundation for feature selection, they remain limited in addressing multivariate outcomes, often due to restrictive parametric or linear assumptions, oversimplified model formulations, and independence-based screening strategies that evaluate features in isolation and thus fail to capture complex structures. To overcome these limitations, we \textcolor{black}{propose using} multivariate random forests (MRF), which have several appealing properties: they are model-free and distribution-free, and their hierarchical tree structure naturally captures complex and nonlinear relationships and higher-order interactions among features. Furthermore, by randomly selecting subsets of features at each split, MRF effectively mitigates multicollinearity, and also enhances predictive accuracy by averaging across \textcolor{black}{an} ensemble of several trees. In particular, the permutation-based variable importance measure (PVIM) quantifies each feature’s overall contribution to prediction by evaluating the change in prediction error obtained from the out-of-bag (OOB) sample after that feature is permuted, conditioning on the presence of all other features. Despite these advantages, two critical challenges must be addressed before MRF can serve as a rigorous feature selection framework: establishing theoretical guarantees for PVIM and assessing its computational feasibility in high-dimensional settings.

In this paper, we fill these gaps by establishing an asymptotic consistency screening property for the permutation-based variable importance measure produced by multivariate random forests. In doing so, we provide a framework for joint, interaction-aware, and heterogeneous-aware screening for multivariate outcomes, thereby offering a new theoretical foundation for the use of MRF in high-dimensional feature selection. A feature selection procedure is said to possess the sure screening property if  
\[
P\left( \mathcal{D} \subseteq \widehat{\mathcal{D}}_n \right) \longrightarrow 1 
\qquad \text{as } n \to \infty,
\]  
where $\mathcal{D}$ denotes the set of truly \textcolor{black}{influential} features and $\widehat{\mathcal{D}}_n$ the set of selected features. This theoretical conclusion ensures that, with probability approaching one, the procedure retains all true features as the sample size becomes sufficiently large. This theoretical guarantee is especially important in high-dimensional applications, such as the motivating genome-wide association studies, where it is crucial to ensure that the selected subset contains all truly \textcolor{black}{influential SNPs that are indeed associated with the human facial morphology}.

\textcolor{black}{The motivating large-scale data consists of} high-resolution 3D facial morphology images (each image is represented as 7,160 quasi-landmark 3D point-cloud XYZ coordinates) and 453,273 SNPs (after quality control) measured for each of the 2,342 human subjects. Following the standard procedure described in \citet{Claes2018}, we extract the first 50 principal components (PCs) from the original 3D facial morphology images and utilize them as multivariate outcomes. As a very standard feature selection procedure in this field, CCA was previously applied to this same dataset, where it evaluated the importance of each SNP individually. In contrast to independence feature selection approaches, MRF emphasizes a very different strategy, naturally accounting for feature dependencies and complex structures. Using MRF, we identify several novel genetic associations not previously reported in any prior studies, while also confirming some genes that have been highlighted in the literature using entirely different methodologies, data sources, foci, or study designs. Furthermore, we evaluate the practical performance of MRF as a feature selection approach through extensive simulation studies under various settings designed to reflect \textcolor{black}{the} nonlinear and complex nature of real-world applications. Across all scenarios, our results demonstrate that MRF consistently outperforms existing feature selection methods in detecting truly \textcolor{black}{influential} features with either strong main effects or \textcolor{black}{weak main effects but having strong interactions with other features. } 

The remainder of this paper is organized as follows. Section 2 introduces the methodology and presents the theoretical results. Section 3 details the designs and results of the simulation studies and compares MRF with several relevant approaches. Section 4 demonstrates the application of MRF to the human facial morphology GWAS dataset and highlights the novel discoveries. Section 5 discusses the broader implications of this work and outlines potential directions for future research. Additional technical proof details are provided in the Appendix.

\section{Methodology}
Let $\mathbf Z = (\mathbf X, \mathbf Y) \in \mathbb{R}^{p+q}$ be a random vector, 
where $\mathbf X = (X^{(1)}, \ldots, X^{(p)})^\top \in \mathbb{R}^p$ denotes the 
$p$-dimensional feature vector and 
$\mathbf Y = (Y^{(1)}, \ldots, Y^{(q)})^\top \in \mathbb{R}^q$ denotes the 
$q$-dimensional outcome vector. The \textcolor{black}{entire sample} consists of $n$ \textcolor{black}{independent copies }of $\mathbf Z$, denoted by
\[
\mathbf Z_1, \ldots, \mathbf Z_n.
\]
\textcolor{black}{Within the bagging framework, each tree is constructed using a subsample randomly drawn from the full dataset without replacement.} The ensemble contains $m_n$ decision trees and each tree is trained on a subset of $\{\mathbf{Z}_1, \ldots, \mathbf{Z}_n\}$ of size $k_n$ \textcolor{black}{with unique indices}.  Let $\mathcal{A}_{n,k_n}$ denote the collection of all subsets of \textcolor{black} {indices} $\{1, \ldots, n\}$ of size $k_n$. We draw $m_n$ subsets \textcolor{black}{of indices}, denoted by $S_1, \ldots, S_{m_n}$, \textcolor{black}{at random} with replacement from $\mathcal{A}_{n,k_n}$; that is,
\[
S_1, \ldots, S_{m_n} \overset{\text{iid}}{\sim} \mathrm{Unif}(\mathcal{A}_{n,k_n}).
\]
Furthermore, the subsets $S_1, \ldots, S_{m_n}$ are sampled independently of the training sample $\mathbf{Z}_1, \ldots, \mathbf{Z}_n$. Denote the collection of \textcolor{black}{these subsets realized across all $m_n$ trees} by 
\[
\mathcal{S}_{k_n,m_n} = \{S_{1},...,S_{m_n}\}.
\]
For each subset $S_m$, let
\[
\mathcal{Z}_{S_m} = \{\mathbf{Z}_i : i \in S_m\}
\]
be the associated subsample of the data.

For a multivariate outcome we use the following splitting criterion at each node of each tree. Consider a parent node $A$ containing $n_A$ observations. To split $A$ into two child nodes, 
a left node $A_L$ and a right node $A_R$, let $a$ be a candidate split threshold of feature $X^{(j)}$. 
For a continuous feature $X^{(j)}$, define
\[
A_L(j,a) = \{ i \in A : X_{i}^{(j)} < a \}, 
\qquad 
A_R(j,a) = \{ i \in A : X_{i}^{(j)} \geq a \}.
\]
For a discrete feature $X^{(j)}$, define
\[
A_L(j,a) = \{ i \in A : X_{i}^{(j)} = a \}, 
\qquad 
A_R(j,a) = \{ i \in A : X_{i}^{(j)} \neq a \}.
\]
The within-node error, defined as the sum of squared Mahalanobis distances, is
\[
\mathrm{SSE}(A) \;=\; \sum_{i \in A} 
\bigl(\mathbf{Y}_i - \bar{\mathbf{Y}}^{(A)}\bigr)^\top 
V^{-1}(A) 
\bigl(\mathbf{Y}_i - \bar{\mathbf{Y}}^{(A)}\bigr),
\]
\textcolor{black}{where $V(A)$ denotes the covariance matrix of the outcome vector estimated using the observations in node $A$.}  The average outcome vector at a node $A$ is
\[
\bar{\mathbf{Y}}^{(A)} \;=\; \frac{1}{n_A}\sum_{i \in A} \mathbf{Y}_i.
\] \textcolor{black}{ Let $n_L$ and $n_R$ denote the number of observations in the left child node $(A_L)$ and right child node $(A_R)$, respectively.} The optimal split $(j^*, a^*)$ at a node A is determined by solving the following optimization problem:
$$(j^*, a^*) = \argmin_{j, a} \left[ \frac{n_{L}}{n_A} \mathrm{SSE}\bigl(A_L(j,a)\bigr) + \frac{n_{R}}{n_A} \mathrm{SSE}\bigl(A_R(j,a)\bigr
) \right].$$
\textcolor{black}{The splitting at node $A$ stops if every candidate split produces a child node with fewer than five observations; consequently, every terminal node contains at least five observations.}

In random forests, in addition to the randomness introduced by subsampling the subjects, there is further randomness arising from \textcolor{black}{subsampling the features}. That is, the random forest = bagging + feature randomness. At each split, rather than considering all $p$ features, the algorithm randomly selects a subset of candidate features for splitting, with the subset size determined by the tuning parameter \texttt{mtry} (as commonly denoted in software implementation).

Specifically, \textcolor{black}{let $\mathcal{B}_{p,\texttt{mtry}}$ denote the collection of all subsets of indices $\{1, \ldots, p\}$ of size \texttt{mtry}. For each tree $m$, we draw $k_n^*$ subsets of such indices, denoted by $J_{m1},...,J_{mk_n^*}$, at random with replacement from $\mathcal{B}_{p,\texttt{mtry}}$; that is,
\[
J_{m1},...,J_{mk_n^*} \overset{\text{iid}}{\sim} \mathrm{Unif}(\mathcal{B}_{p,\texttt{mtry}}).
\] 
Denote the collection of all such random subsets realized across all internal nodes for the $m^{\text{th}}$ tree by 
\[
\omega_m = \{J_{m1}, \dots, J_{mk_n^*}\},
\]}
which captures the entire randomness in the selection of candidate features at all internal nodes for constructing each tree, and is selected prior to growing a tree. Here, $k_n^*$ denotes an upper bound on the number of possible \textcolor{black}{ internal nodes} in each tree. This feature subsampling procedure is performed independently at every internal node of every tree. Although a realized tree may contain fewer than $k_n^*$ internal nodes, we may \textcolor{black}{generate $\omega_m$} in advance and apply it sequentially to all the internal nodes during the recursive splitting process. \textcolor{black}{Any unused components of $\omega_m$} are simply discarded. An illustrative example of this mechanism is provided in 
Figure~\ref{node-feature-selection}. Across trees, these random collections $\omega_{1}, \ldots, \omega_{m_n} $ are  independent and identically distributed, and are independent of the training set $(\mathbf{Z}_1,...,\mathbf{Z}_n)$ and the  subject-subsampling set $\mathcal S_{k_n,m_n}$. 
\begin{figure}[h]
\centering
\begin{tikzpicture}[
  level distance=1.5cm,
  level 1/.style={sibling distance=6cm},
  level 2/.style={sibling distance=3cm},
  every node/.style={draw, font=\tiny}
]

\node { {$J_{m1} = \{2,5,7\}$} }
    child {
        node { {$J_{m2} = \{1,4,8\}$} }
            child {
                node[ellipse] { {terminal node} }
            }
            child {
                node { {$J_{m5} = \{2,8,7\}$} }
                    child {
               node[ellipse] { {terminal node} }
            } 
            child {
                node[ellipse] { {terminal node} }
            }
            }
    }
    child {
        node { {$J_{m3} = \{3,6,9\}$} }
            child {
                node[ellipse] { {terminal node} }
            }
            child {
                node { {$J_{m7} = \{2,4,6\}$} } 
                 child {
                node[ellipse] { {terminal node} }
            } 
            child {
                  node[ellipse] { {terminal node} }
            }}
            }
    ;

\end{tikzpicture}
\caption{Illustration of node-wise feature subsampling in the $m^{\text{th}}$ tree. Each internal node is assigned one component of $\omega_m$ of size \texttt{mtry} = 3 in a sequence. The splitting stops at terminal node so $J_{m4}$ and $J_{m6}$ remain unused.}
\label{node-feature-selection}
\end{figure}

Throughout the article, \textcolor{black}{we use $T_{k_n}(\mathcal{Z}_{S_m},\omega_m)$ to denote the \textcolor{black}{estimate trained from the $m^{\text{th}}$ decision tree of the MRF built on subsampling of subjects reflected by $\mathcal{Z}_{S_m}$ and subsampling of features governed by $\omega_m$.} For a new observation $\mathbf{X}^*$, the corresponding prediction from this tree is denoted by
$T_{\mathbf{X}^*, k_n}(\mathcal{Z}_{S_m}, \omega_m).$}
For a new observation with features $\mathbf{X}^*$, the MRF prediction is given by
\[
 \frac{1}{m_n} \sum_{m=1}^{m_n} 
T_{\mathbf{X}^*, k_n}(\mathcal{Z}_{S_m}, \omega_m).
\]

\subsection{Permutation Variable Importance Measure of the Multivariate Random Forests}
After subsampling, each tree has $(n-k_n)$ OOB observations. The PVIM quantifies the change in the OOB prediction error caused by permuting a single feature while keeping all other features fixed. If $X^{(j)}$ is \textcolor{black}{influential}, this permutation breaks its original association with the outcome and increases the prediction error, yielding a larger PVIM. 

\textcolor{black}{Specifically,  let $\{\pi_{mj}, 
m = 1, \ldots, m_n, 
j = 1, \ldots, p \}$
denote permutations of $n - k_n$
elements, each drawn independently and uniformly at random from the set of all non-identity permutations of $ n - k_n$
elements.} \textcolor{black}{After training the $m^{\text{th}}$ tree, the permutation $\pi_{mj}$ is applied to the indices of the out-of-bag observations for each feature $j$ independently. That is,}
for $i \in S_m^C$, we define
\[
\widetilde{\mathbf{X}}^{(j,\pi_{mj})}_i 
= \bigl(
X_i^{(1)}, \dots, X_i^{(j-1)},\,
X_{\pi_{mj}(i)}^{(j)},\,
X_i^{(j+1)}, \dots, X_i^{(p)}
\bigr),
\]
where the $j^{\text{th}}$ feature of the $i$-th out-of-bag observation 
for tree $m$ is permuted according to a permutation $\pi_{mj}$ while
other features remain unchanged. 
For each tree $m$, the
 prediction difference before and after permutation of feature $X^{(j)}$ obtained from the tree $T_{k_n}(\mathcal{Z}_{S_m}, \omega_m)$ is
\begin{equation} \label{vimp for a tree}
\begin{split} 
I_j(\mathcal{Z}_{S_m}, \omega_m, \pi_{mj}) 
&= \frac{1}{n-k_n} \sum_{i \in S_m^C} 
   \left\| \mathbf{Y}_i - T_{\Tilde{\mathbf{X}}^{(j,\pi_{mj})}_i,k_n}(\mathcal{Z}_{S_m}, \omega_m) \right\|^2 \\[6pt]
&\quad - \frac{1}{n-k_n} \sum_{i \in S_m^C} 
   \left\| \mathbf{Y}_i - T_{\mathbf{X}_i,k_n}(\mathcal{Z}_{S_m}, \omega_m) \right\|^2.
\end{split}
\end{equation}



Finally, the PVIM estimator based on MRF is given by,
\begin{equation}\label{vimp}
    \widehat{\text{PVIM}}(X^{(j)}) 
    = \hat{\Lambda}_n^{(j)}
    = \frac{1}{m_n} \sum_{m=1}^{m_n} I_j(\mathcal{Z}_{S_m}, \Omega_{mj})
\end{equation}
 \textcolor{black}
{where, $\Omega_{mj} = (\omega_{m}, \pi_{mj})$ for $j=1,...,p$ and $m =1,...,m_n$.}
\noindent
\textcolor{black}{This estimator approximates the population-level PVIM of the MRF, given by,}
\begin{equation} \label{lambda}
   \text{PVIM}(X^{(j)}) = \Lambda^{(j)}
= \mathbf{E}\Big[I_j(\mathcal{Z}_{S_m}, \Omega_{mj})\Big].
\end{equation}



\subsection{Feature Screening Procedure Based on PVIM and Its Theoretical Properties}
 
In this section, we claim that the PVIM is consistent and possesses the sure screening property under certain regularity conditions. 
In high-dimensional settings, the number of features typically satisfies $p \gg n$, while only a small subset of them truly influences the outcome. Accordingly, we adopt an asymptotic framework in which $p \to \infty$ as $n \to \infty$, whereas the outcome dimension $q$ remains fixed. The cardinality of the subset containing all truly influential features is allowed to grow with $n$ \textcolor{black}{but at a substantially slower rate.}

Consider the conditional distribution of $\mathbf{Y}$ given $\mathbf{X}$, denoted as $Q_{\mathbf{X}}(B) = \mathrm{P}(\mathbf{Y} \in B \, | \, \mathbf{X}) $, where $B$ represents any Borel set of $\mathbb{R}^q$. 
We define the index sets of all truly influential features as,
\begin{equation} \label{D}
    \mathcal{D} = \text{ \{$j \in \{ 1,...,p\}$: $Q_\mathbf{X}(B)$  depends on $X^{(j)}$ for  at least one Borel set $B$\}}, 
\end{equation}
which indicates that, $Q_{\mathbf{X}}(B)\overset{a.s}{=} Q_{\mathcal{X}_{\mathcal{D}}}(B)$ for all Borel sets $B$ of $\mathbb{R}^q$, where \[\mathcal{X}_{\mathcal{D}} = \{ X^{(j)} \, | \, j \in \mathcal{D}\}.\]  For the development of theoretical results, we impose the following regularity condition:

\begin{condition} \label{threshold}
    There exists a \textcolor{black} {constant $c_0>0$} such that for every $j \in \mathcal{D}$, $\Lambda^{(j)}> c_0$.
\end{condition}
The set of indices of important features selected by MRF is determined based on their estimated PVIM values as
\begin{equation} \label{D hat star}
  \widehat{\mathcal{D}}_n =\{j \in \{1,...,p\}: \hat{\Lambda}_n^{(j)}> c_0/2 \}, 
\end{equation}
where $c_0$ is the threshold specified in Condition~\ref{threshold}.


\textcolor{black}{For each tree $m=1,\dots,m_n$ with subsample 
$S_m \subset \{1,\dots,n\}$, we  denote the prediction error for an observation $\mathbf{Z}_i$ as}
\begin{equation*}\label{e}
g_n(\mathbf{Z}_i,\mathcal{Z}_{S_m},\omega_m)
=
\big\|
\mathbf{Y}_i -
T_{\mathbf{X}_i,k_n}(\mathcal{Z}_{S_m},\omega_m)
\big\|^2,
\end{equation*}
where $i=1,\ldots,n$.
Moreover,  \textcolor{black}{we denote the prediction} error obtained after permuting the $j$-th feature according to $\pi_{mj}$ as
\begin{equation*}\label{e_j}
g_{nj}(\mathbf{Z}_i,\mathcal{Z}_{S_m},\Omega_{mj})
=
\big\|
\mathbf{Y}_i -
T_{\tilde{\mathbf{X}}_i^{(j,\pi_{mj})},k_n}
(\mathcal{Z}_{S_m},\omega_m)
\big\|^2.
\end{equation*}
With this notation in place, observe that, for each $j=1,\dots,p$, the permutation variable importance estimator $\hat{\Lambda}_n^{(j)}$ can be expressed as
\[
\hat{\Lambda}_n^{(j)} 
=
\frac{1}{m_n}\frac{1}{n-k_n}
\left(
\sum_{m=1}^{m_n}\sum_{i=1}^{n} 
g_{nj}(\mathbf{Z}_i,\mathcal{Z}_{S_m},\Omega_{mj})\mathbb{1}_{\{i\notin S_m\}}
-
\sum_{m=1}^{m_n}\sum_{i=1}^{n} 
g_{n}(\mathbf{Z}_i,\mathcal{Z}_{S_m},\omega_m)\mathbb{1}_{\{i\notin S_m\}}
\right).
\]
Finally, define
\begin{align}
\bar g_n(\mathbf{Z}_i,\mathcal{Z}_{S_m})
&=
\mathbf{E}\Big[
g_n(\mathbf{Z}_i,\mathcal{Z}_{S_m},\omega_m)
\,\big|\,
\mathbf{Z}_1,\dots,\mathbf{Z}_n,\mathcal S_{k_n,m_n}
\Big],  \notag \\
 \text{and} \quad \bar g_{nj}(\mathbf{Z}_i,\mathcal{Z}_{S_m}) \notag
&=
\mathbf{E}\!\Big[
g_{nj}(\mathbf{Z}_i,\mathcal{Z}_{S_m},\Omega_{mj})
\,\big|\,
\mathbf{Z}_1,\dots,\mathbf{Z}_n,\mathcal S_{k_n,m_n}
\Big], 
\qquad j=1,\dots,p. \notag \label{bar_e_j}
\end{align}

\noindent
The following theorem is stated in a setting where all observations $\mathbf{Z}_1,...,\mathbf{Z}_n$ as well as number of features $p$, the threshold $c_0$ defined in Condition (\ref{threshold}) and the set of true features $\mathcal{D}$ may vary with $n$, i.e $\mathbf{Z}_1 = \mathbf{Z}_{1,n},....,\mathbf{Z}_n = \mathbf{Z}_{n,n}$, $p = p_n$, $c_0 = c_{0n}$ and $\mathcal{D} = \mathcal{D}_n$. 
To keep the notation simple we suppressed this dependence on $n$.

\begin{theorem*}[Consistency and Sure Screening]\label{thm:sure-screening}
Suppose the following holds for some $\alpha \ge 0$:

\begin{itemize}

\item[(C1)]
$\mathbf{E}\!\Big[
\Bigl(
g_n(
\mathbf{Z}_{k_n+1},
\mathbf{Z}_1, \ldots, \mathbf{Z}_{k_n},
\omega_1
)
-
\bar{g}_n(
\mathbf{Z}_{k_n+1},
\mathbf{Z}_1, \ldots, \mathbf{Z}_{k_n}
)
\Bigr)^2
\Big]
= O(n)$ and,

$\max\limits_{1 \le j \le p}\mathbf{E}\!\Big[
\Bigl(
g_{nj}(
\mathbf{Z}_{k_n+1},
\mathbf{Z}_1, \ldots, \mathbf{Z}_{k_n},
\Omega_{1j}
)
-
\bar{g}_{nj}(
\mathbf{Z}_{k_n+1},
\mathbf{Z}_1, \ldots, \mathbf{Z}_{k_n}
)
\Bigr)^2
\Big]
= O(n), $ 

\item[(C2)]
$\mathbf{E}\!\Big[
\bar{g}_n^2\!\left(
\mathbf{Z}_{k_n+1},
\mathbf{Z}_1, \ldots, \mathbf{Z}_{k_n}
\right)
\Big]
= O\!\left(k_n^{\alpha}\right)$ and,

$\max\limits_{1 \le j \le p}\mathbf{E}\!\left[
\bar{g}_{nj}^2\!\left(
\mathbf{Z}_{k_n+1},
\mathbf{Z}_1, \ldots, \mathbf{Z}_{k_n}
\right)
\right]
= O\!\left(k_n^{\alpha}\right), $
\item[(C3)]  $\underset{n \to \infty}{\lim} \dfrac{n}{m_n} =0$ \, and \, $\underset{n \to \infty}{\lim} \dfrac{k_n^{2+\alpha}}{n} =0$.
\end{itemize}
Then 
\begin{equation} \label{consistency}
   \underset{1 \leq j \leq p}{\max} \mathrm{Var}(\hat{\Lambda}_n^{(j)}) = O(R_n),
\end{equation}
where $R_n = \max\Big\{\dfrac{k_n^{2+\alpha}}{n}, \dfrac{n}{m_n}\Big\}$.
\textcolor{black}{Furthermore, assume that 
\begin{equation}\label{*}
 \quad \underset{n \to \infty}{\lim}\dfrac{ \mathrm{Card}(\mathcal{D})R_n}{c_0^2} = 0.  
\end{equation}}
Then as long as Condition \ref{threshold} is satisfied we have the sure screening property, that is,
\begin{equation} \label{ss equation}
   \mathrm{P}\!\left(
\mathcal{D} \subset \widehat{\mathcal{D}}_n
\right)
\longrightarrow 1
\quad \text{as} \,\, n \to \infty, 
\end{equation}
where $\mathcal{D}$ and $\widehat{\mathcal{D}}_n$ are defined in 
\eqref{D} and \eqref{D hat star}, respectively.
\end{theorem*}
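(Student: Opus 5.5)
The plan is to prove the variance bound \eqref{consistency} by conditioning on the training data and the subject subsamples, and then to get sure screening from Chebyshev's inequality and a union bound over $\mathcal{D}$. The rest of this paragraph treats the variance bound. Write $\hat\Lambda_n^{(j)} = A_{nj} - B_n$, where $A_{nj}$ is the permuted-error average and $B_n$ is the unpermuted one. Since $\mathrm{Var}(A-B)\le 2\mathrm{Var}(A)+2\mathrm{Var}(B)$, it suffices to bound each term uniformly in $j$, and the two terms are handled in the same way. Let $\mathcal{F}_n=\sigma(\mathbf Z_1,\dots,\mathbf Z_n,\mathcal S_{k_n,m_n})$ and decompose
\[
\mathrm{Var}(B_n)=\mathbf E\bigl[\mathrm{Var}(B_n\mid\mathcal F_n)\bigr]+\mathrm{Var}\bigl(\mathbf E[B_n\mid\mathcal F_n]\bigr).
\]
Conditionally on $\mathcal F_n$, the summands indexed by $m$ are independent, because the $\omega_m$ (and the $\pi_{mj}$) are i.i.d. and independent of the data. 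Therefore
\[
\mathrm{Var}(B_n\mid\mathcal F_n)=\frac{1}{m_n^2}\sum_{m}\mathrm{Var}\Bigl(\tfrac{1}{n-k_n}\textstyle\sum_{i\notin S_m}g_n\,\Big|\,\mathcal F_n\Bigr)\le \frac{1}{m_n^2}\sum_m \frac{1}{n-k_n}\sum_{i\notin S_m}\mathbf E\bigl[(g_n-\bar g_n)^2\mid\mathcal F_n\bigr],
\]
where the inequality is Cauchy--Schwarz. Taking expectations and using exchangeability (every pair $(i,S_m)$ with $i\notin S_m$ has the law of $(\mathbf Z_{k_n+1},\mathbf Z_1,\dots,\mathbf Z_{k_n})$), together with (C1), gives the bound $O(n)/m_n = O(n/m_n)$. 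The same argument applies to $A_{nj}$ with $\Omega_{mj}$ in place of $\omega_m$, and the bound is uniform in $j$ because (C1) takes a maximum over $j$.

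Next comes the conditional mean. We have $\mathbf E[B_n\mid\mathcal F_n]=\frac1{m_n}\sum_m h(S_m)$ with $h(S)=\frac{1}{n-k_n}\sum_{i\notin S}\bar g_n(\mathbf Z_i,\mathcal Z_S)$. This is an incomplete U-statistic. Conditioning further on the data and using that the $S_m$ are i.i.d. uniform draws,
\[
\mathrm{Var}\Bigl(\tfrac1{m_n}\textstyle\sum_m h(S_m)\Bigr)=\frac1{m_n}\mathbf E\bigl[\mathrm{Var}(h(S_1)\mid \mathbf Z)\bigr]+\mathrm{Var}(U_n),\qquad U_n=\binom{n}{k_n}^{-1}\sum_{S\in\mathcal A_{n,k_n}}h(S).
\]
The first term is at most $\mathbf E[\bar g_n^2]/m_n=O(k_n^\alpha/m_n)$ by (C2) and Jensen. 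Since $k_n^\alpha\le k_n^{2+\alpha}$ and $k_n^{2+\alpha}/n\to0$, we have $k_n^{\alpha}/m_n \le (n/m_n)(k_n^{\alpha}/n)$, which is $O(n/m_n)$.

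The term $U_n$ is the main obstacle. It is a complete U-statistic of order $k_n+1$ with kernel $\bar g_n(\mathbf z_{k_n+1};\mathbf z_1,\dots,\mathbf z_{k_n})$, symmetrized over which of the $k_n+1$ points is held out. For this I would use the Hoeffding bound for U-statistics with growing order $r_n=k_n+1$. By the Hoeffding decomposition, $\mathrm{Var}(U_n)\le \frac{r_n}{n}\mathrm{Var}(\text{kernel})$, or via the sharper Efron--Stein/$\zeta_c$ bounds, $\mathrm{Var}(U_n)\le \frac{r_n^2}{n}\zeta_1+\dots$. The crude route $\frac{r_n}{n}\mathbf E[\bar g_n^2]$ yields $O(k_n^{1+\alpha}/n)$, which is already within $R_n$. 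If symmetrization over the held-out position costs an extra factor $k_n$ (the symmetrized kernel averages $k_n+1$ terms, and bounding its second moment by Cauchy--Schwarz is harmless, but handling the $n-k_n$ normalization and the overlap between $i$ and $S$ may introduce one more factor), we still land at $O(k_n^{2+\alpha}/n)$, which is exactly the rate stated. This is why I expect the authors' $k_n^{2+\alpha}$. The care needed is to write $\sum_S\sum_{i\notin S}$ exactly as $(k_n+1)\binom{n}{k_n+1}$ times a symmetric average and to track the constants. Combining the pieces gives $\max_j\mathrm{Var}(\hat\Lambda_n^{(j)})=O(R_n)$.

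Finally, sure screening. We have $\mathbf E\hat\Lambda_n^{(j)}=\Lambda^{(j)}$, because every pair $(i,S_m,\Omega_{mj})$ with $i\notin S_m$ has the same law, which also matches definition \eqref{lambda}. By Condition~\ref{threshold}, for $j\in\mathcal D$ the event $\hat\Lambda_n^{(j)}\le c_0/2$ implies $|\hat\Lambda_n^{(j)}-\Lambda^{(j)}|>c_0/2$. A union bound and Chebyshev's inequality then give
\[
\mathrm P(\mathcal D\not\subset\widehat{\mathcal D}_n)\le\sum_{j\in\mathcal D}\frac{4\,\mathrm{Var}(\hat\Lambda_n^{(j)})}{c_0^2}\le \frac{4\,\mathrm{Card}(\mathcal D)\,C R_n}{c_0^2}\to0
\]
by \eqref{*}.
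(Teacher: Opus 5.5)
Your argument is correct, up to one caveat that it shares with the paper (noted at the end). It handles the data-level fluctuation by a genuinely different and sharper route than the paper does.

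The paper splits $\Gamma_{nj}=A_{nj}+B_{nj}+U_{nj}$. It first projects out the held-out point $\mathbf Z_i$, bounding $\mathbf E B_{nj}^2=O(k_n^{\alpha}/n)$ by showing that most cross terms vanish and counting the rest. What remains is an incomplete U-statistic of order $k_n$ whose kernel $h_{nj}$ depends on the training subsample alone. The paper then combines the incomplete-U-statistic variance formula with the explicit variance of the complete U-statistic, bounding every $\zeta^{(j)}_{c,k_n}$ by $\zeta^{(j)}_{k_n,k_n}$. That crude step is where $k_n^{2+\alpha}/n$ comes from, already at $c=1$.

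You instead apply the law of total variance twice; your second decomposition is a direct derivation of the same incomplete-U-statistic formula. You also keep the held-out point inside the kernel. Because $\binom{n}{k_n}(n-k_n)=(k_n+1)\binom{n}{k_n+1}$, your $U_n$ is exactly a complete U-statistic of order $k_n+1$ with symmetric kernel $\phi=\frac{1}{k_n+1}\sum_{l}\bar g_n(\mathbf z_l;\mathbf z_{-l})$. Jensen gives $\mathbf E\phi^2\le\mathbf E\bar g_n^2$. Hoeffding's variance bound $\mathrm{Var}(U)\le\frac{r}{n}\zeta_r$ for U-statistics, which follows from $\zeta_c/c\le\zeta_r/r$ and the hypergeometric overlap mean $r^2/n$, then gives $O(k_n^{1+\alpha}/n)$.

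So there is no hidden extra factor of $k_n$, and you should drop that hedge. Your route avoids the cross-term bookkeeping and proves the theorem under the weaker requirement $k_n^{1+\alpha}/n\to 0$. The paper's route buys a kernel of order $k_n$ that is literally a tree on the subsample, matching the random-forest U-statistic framework it cites, and it needs only the trivial monotonicity $\zeta_c\le\zeta_{k_n}$. Using $\zeta_c\le (c/k_n)\zeta_{k_n}$ there instead would recover your rate. The screening step is the same Chebyshev-plus-union-bound argument in both proofs.

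The shared caveat concerns the permuted term. Since $\pi_{mj}$ permutes among out-of-bag indices, $\bar g_{nj}$ also depends on the $j$-th coordinates of the other out-of-bag observations, not only on $(\mathbf Z_i,\mathcal Z_{S_m})$. For the permuted term, your $U_{nj}$ is therefore a mixture of U-statistics of orders $k_n+1$ and $k_n+2$. The paper's kernel $h_{nj}$ and its cross-term cancellations are affected in the same way, since its notation suppresses this dependence. Hoeffding's bound still controls each piece. However, the order-$(k_n+2)$ piece needs a second-moment bound on the error with the $j$-th coordinate taken from an independent copy, which is a natural variant of (C2) rather than (C2) as literally stated.
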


{\color{black}The existing sure screening literature typically assumes that the feature dimension \(p\), the set of true features \(D\), and the threshold \(c_0\) are fixed and do not vary with \(n\). In contrast, our theorem is stated in a more general setting, allowing these parameters to diverge with \(n\). The classical fixed-parameter setting is therefore a direct special case of our result, as stated in the following remark.}
\vspace{2em}

\noindent
\textcolor{black}{\textbf{Remark.} If the number of features \(p\), \(\mathrm{Card}(D)\), and \(c_0\) are fixed and do not vary with \(n\), then assumption~(\ref{*}) reduces to \(\underset{n \to \infty}{\lim} R_n = 0.\) In this case the sure screening property still holds
\[
\mathrm{P}\!\left(
\mathcal{D} \subset \widehat{\mathcal{D}}_n
\right)
\longrightarrow 1
\quad \text{as } n \to \infty.
\]}

 \noindent
The proof of the theorem is provided in the Appendix.

\section{Simulation Studies}
\subsection{Evaluation Metrics}
In this section, we evaluate the finite-sample performance of MRF using three simulation examples and compare it with several other feature selection methods, including multivariate LASSO, CCA, and five SIS approaches developed for multivariate outcomes: DC-SIS, BCor-SIS, PC-Screen, SC-SIS, and MrDc-SIS. We assess the performance of each feature selection method using the following four criteria across 100 or 10 replicated simulation datasets:
\begin{itemize}
\item \textbf{Minimum Selection Size ($S$)}: The smallest number of features required to ensure that all true features are retained. Boxplots of $S$ across all replicated simulation datasets are used to evaluate the accuracy and efficiency of selection performance. Smaller mean values or lower box locations reflect \textcolor{black}{higher selection accuracy}, while smaller standard errors or narrower box widths indicate higher efficiency in selecting the true features.

\item \textbf{Individual Success Rate ($P_s$)}: The proportion of times each true feature is successfully selected across all replicated simulation datasets, according to the respective thresholding rule of each method. For CCA, the threshold is determined using a Bonferroni-adjusted family-wise error rate, i.e., $0.05/p$. LASSO automatically selects all features with nonzero coefficients, and its tuning parameter $\lambda$ is chosen by minimizing the cross-validation error. For the five multivariate SIS approaches and the MRF, features are ranked in decreasing order of their respective dependence measure (i.e., dependence score for SIS and PVIM for MRF), and then a max-ratio rule is applied to determine the selection threshold. Specifically, we retain features that fall within the intersection of two sets: (i) the top 5\% of features ranked by their dependence measure and (ii) all features ranking within the top five max-ratio thresholds. This dual criterion ensures that the selected features exhibit both absolute and relative importance. Details of the max-ratio rule are provided in \citet{Zhao2022}.

\item \textbf{Overall Success Rate ($P_a$)}: The proportion of times all truly \textcolor{black}{influential} features are simultaneously selected across all replicated simulation datasets, according to the respective thresholding rule of each method. \textcolor{black}{Higher $P_s$ and $P_a$ values indicate a stronger ability of the method to discover the true features.}

\item \textbf{False Selection Rate ($T$)}: The ratio of the number of noise features incorrectly selected according to the thresholding rule of each method to the total number of noise features. Boxplots of $T$ across all replicated simulation datasets are used to evaluate the robustness and accuracy of selection performance. Smaller values of $T$ indicate \textcolor{black}{fewer incorrectly selected noise features.}
\end{itemize}

\subsection{The Simulation Design}
In the current literature, simulation designs typically generate data from a single parametric and homogeneous model, with randomness introduced only through noise. However, real-world data are often heterogeneous and highly complex. To better reflect these characteristics, we propose two novel simulation frameworks that provide realistic and challenging testbeds for evaluating feature selection methods in general. These designs enable a more rigorous assessment of methodological performance in machine learning literature. In general, methods that are resilient to heterogeneity tend to generalize better to new subjects.

Our simulations incorporate multiple layers of heterogeneity, complexity, and randomness. Firstly, we do not impose any specific model structure. Secondly, we allow different outcome components of different subjects to be influenced by different subsets of features. For example, subject 1’s first two outcomes $(Y_1^{(1)}, Y_1^{(2)})$ may depend on the interaction $X_1^{(10)}\times X_1^{(12)}$, whereas subject 2’s second and fourth outcomes $(Y_2^{(2)}, Y_2^{(4)})$ may be influenced by the three-way interaction $X_2^{(1)}\times X_2^{(3)}\times X_2^{(5)}$. \textcolor{black}{Third, we introduce subject-level threshold to increase heterogeneity. Specifically, 
although $X^{(1)}$ remains associated with $Y^{(1)}, \ldots, Y^{(4)}$, more association variability is induced depending on whether 
$X_i^{(1)} < 5$ or $X_i^{(1)} \ge 5$. }

\subsubsection{The simulation design I}
\textcolor{black}{In Simulation Design I, the feature vector $\mathbf{X} = (X^{(1)}, \ldots, X^{(p)})^\top$  is generated from a multivariate normal distribution with mean vector $\mathbf{0}$ and covariance matrix $\Sigma_{p \times p} = (\sigma_{ij})$, where 
$\sigma_{ij} = 0.8^{|i-j|}$. A set of true features is 
predetermined from $\mathbf{X}$ and denoted by $\mathcal{D}$, while the remaining features in 
$\mathbf{X}$ are treated as noise features. Next, we generate $2q$ independent outcome component mean values from normal distributions, with their means sampled uniformly from the interval $[0,5]$ and their variances sampled uniformly from $[1,10]$. These $2q$ outcome component mean values are then divided into two groups: $\mu_l = \{\mu^{(1)}_l, \ldots, \mu^{(q)}_l\}$ for the left nodes and $\mu_r = \{\mu^{(1)}_r, \ldots, \mu^{(q)}_r\}$ for the right nodes. Each true feature in $\mathcal{D}$ is set to be connected to four outcome component mean values (overlap is allowed). To move beyond the standard parametric and homogeneous simulation framework, we 
construct tree-based structures in which each branch randomly selects an interaction 
order, ranging from one-way up to five-way interactions, while allowing features to 
be sampled with replacement at each internal node to introduce additional nonlinearities. For simplicity, 
subjects are assigned to the left branch if their observations are less than $0.5$. Each subject is then linked to four outcome component mean values according to its involved features along the branch path from the root to the terminal node reached by that 
subject: four outcome component mean values from $\mu_l$ if the subject locates in the left branch, 
or the corresponding four from $\mu_r$ if it locates in the right branch. The observation of each outcome 
component for each subject is generated as the average of the corresponding 
assigned outcome component mean values plus an independent error term 
$\varepsilon \sim N(0,1)$. To further increase complexity, this procedure is repeated multiple times with different tree structures. The final multivariate outcome observations for each subject, 
$\mathbf{Y}_i = (Y^{(1)}_i, \ldots, Y^{(q)}_i)$, is obtained by averaging the 
observations of each outcome component across all trees. This design induces strong correlations among outcome components, as they are 
jointly influenced by overlapping subsets of true features through different 
subjects across multiple trees with different interaction structures. In addition, substantial heterogeneity 
is introduced because different subjects may follow distinct outcome–feature 
relationship patterns. The resulting dataset consists of the outcome matrix 
$\mathbf{Y} \in \mathbb{R}^{n \times q}$ and the feature matrix 
$\mathbf{X} \in \mathbb{R}^{n \times p}$, which are analyzed as if the data-generating mechanism were unknown, mirroring real-world practice.}

\subsubsection{The simulation design II}
In the simulation design II, we preserve the same levels of heterogeneity, complexity, and randomness as in design I when connecting outcomes with true features. However, we further increase the difficulty by replacing the random outcome vectors with 3D images. This design closely mimics the GWAS human facial morphology dataset that motivates our study.

\begin{figure}[h!]
    \centering
    \includegraphics[scale=0.4]{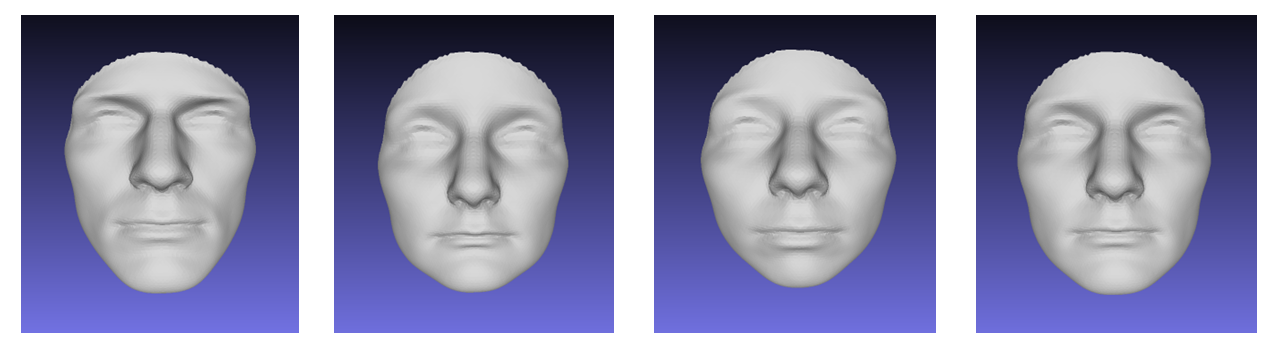}
    \vspace{-1em} 
    \caption{Four examples of synthetic human face simulations.}
    \label{fig:Fig.f}
\end{figure}

\textcolor{black}{To simulate artificial 3D facial images, we first apply principal component analysis (PCA) to the entire real dataset to obtain all principal component (PC) scores. We then randomly select one reference facial morphology and take its first $q$ principal component scores, each perturbed with Gaussian noise to generate $2q$ values, which are partitioned into two groups of $q$ corresponding to the left and right outcome component mean values. Their associations with the true features are constructed following exactly the same framework used in Simulation Design I. From there, five new PC scores are generated for each of the $n$ subjects, serving the same role as the final outcome observations, $\mathbf{Y}_i = (Y^{(1)}_i, \ldots, Y^{(q)}_i)$, for each subject $i$. These five PC scores are retained as the multivariate outcome vector for subsequent analysis. To reconstruct 3D facial images for demonstration purposes, the simulated PC scores are combined with the remaining PC scores of the same reference facial morphology (the one randomly selected earlier to obtain the original PC scores) and transformed back into the facial morphology space to generate $n$ artificial 3D facial images (see Figure~\ref{fig:Fig.f} for examples). The first five simulated PCs explain more than $98\%$ of the total variation in the simulated facial morphologies.}

An important advantage of using PCA to simulate morphologies in this simulation design is that it allows us to regulate and control the dominant modes of structural variation in the morphologies, rather than simply adding pixel-level noise, \textcolor{black}{which would produce images dominated by random pixel fluctuations with no meaningful structure.} Consequently, this design provides a biologically plausible and realistic testbed for evaluating feature selection methods when applied to complex morphology data.

\subsection{Simulation Results}
\subsubsection{Simulation Study 1}
In this simulation, we generate data following the framework described in Simulation Design I. The sample size is set to $n = 200$, with $p = 500$ features and $q = 10$ outcome components. The set of true features is specified as $\mathcal{D} = \{X_1, X_{101}, X_{201}, X_{301}, X_{401}\}$, while the remaining $495$ features are treated as noise. 

\begin{figure}[ht!]
    \centering
    \includegraphics[scale=0.3]{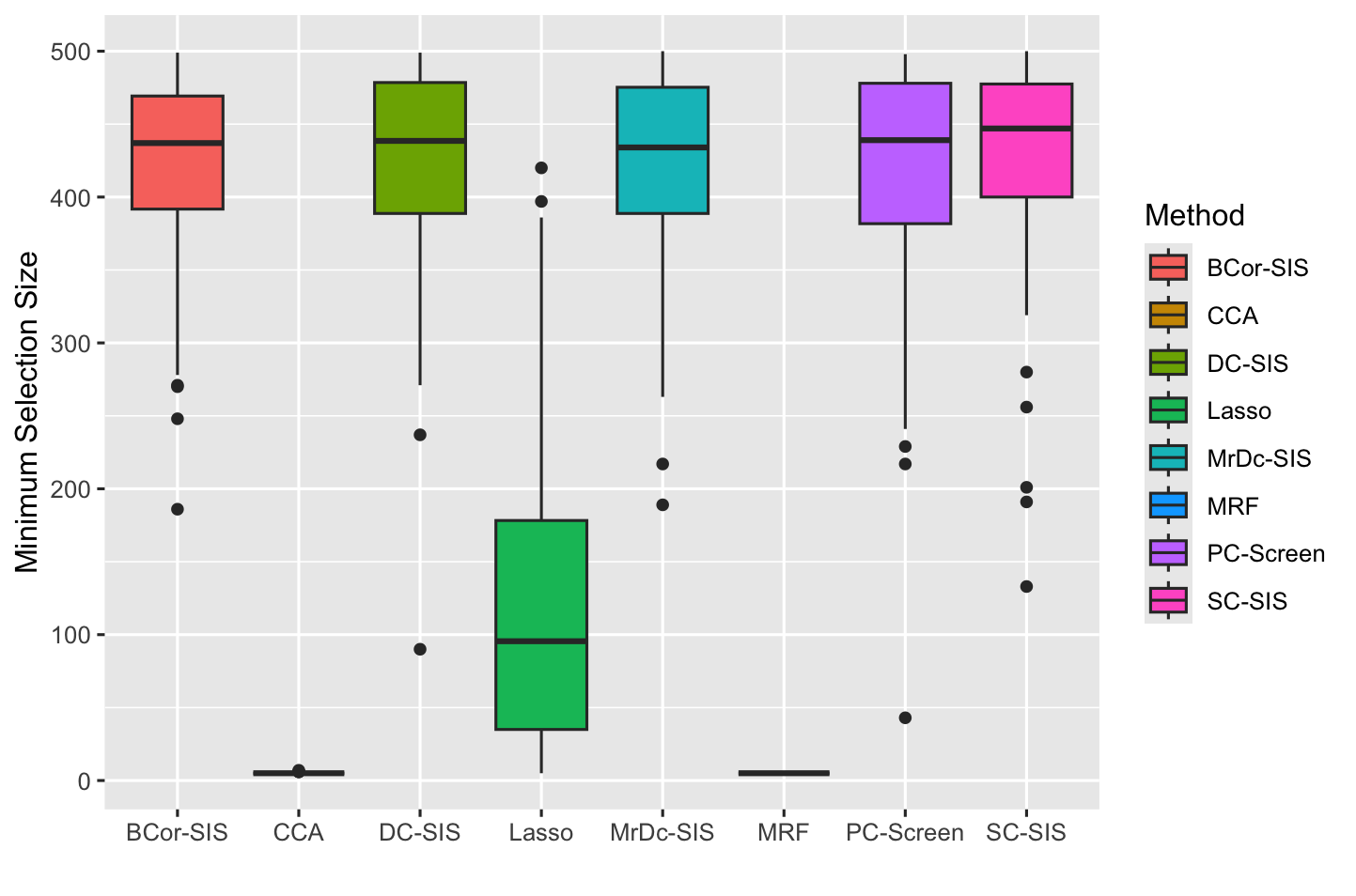}
    \vspace{-1.5em} 
    \caption{Boxplots of the minimum selection size $S$ in Simulation Study 1 across 100 replicated simulation datasets. The eight colors represent the eight approaches. }
    \label{fig:Fig.1}
\end{figure}

\begin{table}[h!]
\caption{
The minimum selection size $S$ for Simulation Study 1. The first row shows the mean of $S$, and the second row shows the standard error of $S$ across 100 replicated simulation datasets.
}
\label{tab:Table.1}
\centering
\begin{tabular}{ccccccccc}
\hline
         & Bcor-SIS & CCA   & DC-SIS & LASSO  & MrDc-SIS & PC-Screen & SC-SIS & MRF \\ \hline
Mean of $S$ & 420.13   & 17.65 & 422.73 & 152.87 & 421.35   & 417.12    & 427.27 & \textbf{7.28} \\
SE of $S$  & 66.93    & 15.69 & 68.16  & 117.77 & 67.23    & 76.96     & 70.30  & \textbf{4.59} \\ \hline
\end{tabular}
\end{table}

Table \ref{tab:Table.1} and Figure \ref{fig:Fig.1} summarize the minimum selection size obtained by eight approaches across 100 replicated simulation datasets in Simulation Study 1. Both the mean and variability of $S$ are reported. As demonstrated in Figure \ref{fig:Fig.1}, MRF and CCA exhibit an overwhelmingly superior performance. Among all the eight approaches, MRF achieves the best overall results, successfully identifying all five true features with an average minimum selection size of only 7.28 out of 500 features, approximately 1/60 of that required by the five SIS approaches (see Table \ref{tab:Table.1}). CCA attains an average minimum selection size of 17.65, slightly larger than MRF with a standard error three times higher (15.69 versus 4.59), but it remains highly competitive among the eight approaches.

\begin{table}[h!]
\caption{
Individual success rate ($P_s$) and overall success rate ($P_a$) for eight methods in the Simulation Study 1.
}
\label{tab:Table2}
\centering
\scalebox{0.75}{
\begin{tabular}{|ccccc|c|ccccc|c|ccccc|c|}
\hline
\multicolumn{6}{|c|}{\textbf{Bcor-SIS}} & \multicolumn{6}{c|}{\textbf{CCA}} & \multicolumn{6}{c|}{\textbf{DC-SIS}} \\ \hline
\multicolumn{5}{|c|}{$P_s$} & $P_a$ & \multicolumn{5}{c|}{$P_s$} & $P_a$ & \multicolumn{5}{c|}{$P_s$} & $P_a$ \\ \hline
$X_1$ & $X_{101}$ & $X_{201}$ & $X_{301}$ & $X_{401}$ & All &
$X_1$ & $X_{101}$ & $X_{201}$ & $X_{301}$ & $X_{401}$ & All &
$X_1$ & $X_{101}$ & $X_{201}$ & $X_{301}$ & $X_{401}$ & All \\
0.01 & 0.05 & 0.03 & 0.03 & 0.01 & 0.00 &
1.00 & 1.00 & 1.00 & 0.74 & 0.98 & 0.74 &
0.01 & 0.05 & 0.03 & 0.01 & 0.05 & 0.00 \\ \hline

\multicolumn{6}{|c|}{\textbf{LASSO}} & \multicolumn{6}{c|}{\textbf{MrDc-SIS}} & \multicolumn{6}{c|}{\textbf{PC-Screen}} \\ \hline
\multicolumn{5}{|c|}{$P_s$} & $P_a$ & \multicolumn{5}{c|}{$P_s$} & $P_a$ & \multicolumn{5}{c|}{$P_s$} & $P_a$ \\ \hline
$X_1$ & $X_{101}$ & $X_{201}$ & $X_{301}$ & $X_{401}$ & All &
$X_1$ & $X_{101}$ & $X_{201}$ & $X_{301}$ & $X_{401}$ & All &
$X_1$ & $X_{101}$ & $X_{201}$ & $X_{301}$ & $X_{401}$ & All \\
1.00 & 1.00 & 1.00 & 1.00 & 0.98 & 0.98 &
0.01 & 0.05 & 0.02 & 0.01 & 0.04 & 0.00 &
0.03 & 0.06 & 0.03 & 0.01 & 0.04 & 0.00 \\ \hline

\multicolumn{6}{|c|}{\textbf{SC-SIS}} & \multicolumn{6}{c|}{\textbf{MRF}} & \multicolumn{6}{c}{} \\ \cline{1-12}
\multicolumn{5}{|c|}{$P_s$} & $P_a$ & \multicolumn{5}{c|}{$P_s$} & $P_a$ & \multicolumn{6}{c}{} \\ \cline{1-12}
$X_1$ & $X_{101}$ & $X_{201}$ & $X_{301}$ & $X_{401}$ & All &
$X_1$ & $X_{101}$ & $X_{201}$ & $X_{301}$ & $X_{401}$ & All &
\multicolumn{6}{c}{} \\
0.00 & 0.02 & 0.00 & 0.06 & 0.02 & 0.00 &
1.00 & 1.00 & 1.00 & 0.84 & 0.99 & \textbf{0.84} &
\multicolumn{6}{c}{} \\ \cline{1-12}
\end{tabular}
}
\end{table}

\begin{figure}[h!]
    \centering
    \includegraphics[scale=0.2]{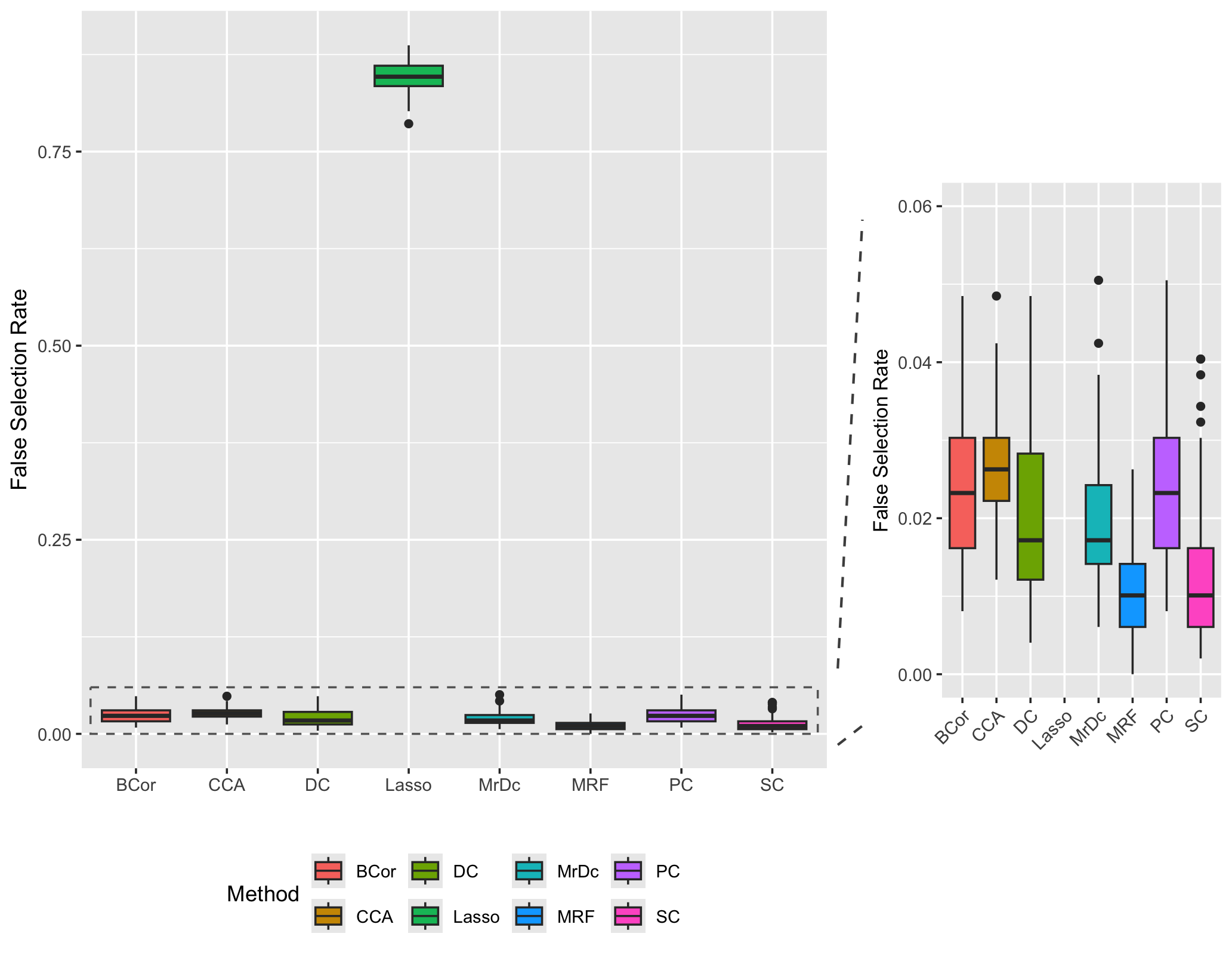}
    \vspace{-1em} 
    \caption{Boxplots of the false selection rate $T$ for Simulation Study 1 across 100 replicated datasets. Left panel: Results for all eight approaches (each color corresponds to one method). Right panel: Zoomed-in results for the seven approaches excluding LASSO.}
    \label{fig:Fig.2}
\end{figure}

Table~\ref{tab:Table2} reports the individual and overall success rates across the eight methods. When applying the max-ratio thresholding rule as MRF does, all five SIS methods fail completely, each yielding an overall success rate of 0\%. In contrast, MRF consistently identifies all true features with an overall success rate of 84\%, while CCA achieves 74\%. LASSO attains the highest overall success rate at 98\%, but this result must be interpreted with caution, as success rates should be evaluated alongside false selection rates. LASSO selects a large number of irrelevant noise features, with an average minimum selection size of approximately 152.87, leading to a substantially inflated false selection rate. As shown in the left panel of Figure~\ref{fig:Fig.2}, LASSO’s false selection rate exceeds 75\% in every replication, with a median above 80\%. By comparison, the right panel of Figure~\ref{fig:Fig.2} demonstrates that MRF achieves a median false selection rate of around 1\% and remains below 3\% across all 100 replicated datasets.

To further assess the performance of the three most promising approaches—MRF, CCA, and LASSO—we consider a more challenging scenario with $p = 10{,}000$ categorical features. To reduce computational burden, we conduct 10 replicated simulations. Table~\ref{tab:Table3} shows that both LASSO and MRF achieve overall success rates ($P_a$) of 90\%, whereas CCA performs substantially worse, with $P_a$ dropping to 50\%. However, LASSO’s high success rate is offset by an extremely large false selection rate, approximately 80-fold higher than that of MRF on average. For brevity, the corresponding tables and figures for this additional simulation are omitted, as they exhibit patterns similar to those already presented.

\begin{table}[h!]
\caption{
Individual success rate ($P_s$) and overall success rate ($P_a$) for the three most promising approaches under the categorical feature setting with high dimension ($p = 10,000$) in the Simulation Study 1.
}
\label{tab:Table3}
\centering
\scalebox{0.9}{
\begin{tabular}{|ccccc|c|ccccc|c|}
\hline
\multicolumn{6}{|c|}{\textbf{LASSO}} & \multicolumn{6}{c|}{\textbf{CCA}} \\ \hline
\multicolumn{5}{|c|}{$P_s$} & $P_a$ & \multicolumn{5}{c|}{$P_s$} & $P_a$ \\ \hline
$X_1$ & $X_{101}$ & $X_{201}$ & $X_{301}$ & $X_{401}$ & All &
$X_1$ & $X_{101}$ & $X_{201}$ & $X_{301}$ & $X_{401}$ & All \\
1.00  & 1.00      & 1.00      & 1.00      & 0.90      & 0.90 &
1.00  & 1.00      & 1.00      & 0.90      & 0.50      & 0.50 \\ \hline
\multicolumn{6}{|c|}{\textbf{MRF}} & \multicolumn{6}{c}{} \\ \cline{1-6}
\multicolumn{5}{|c|}{$P_s$} & $P_a$ & \multicolumn{6}{c}{} \\ \cline{1-6}
$X_1$ & $X_{101}$ & $X_{201}$ & $X_{301}$ & $X_{401}$ & All & \multicolumn{6}{c}{} \\
1.00  & 1.00      & 1.00      & 1.00      & 0.90      & 0.90 & \multicolumn{6}{c}{} \\ \cline{1-6}
\end{tabular}
}
\end{table}

\subsubsection{Simulation Study 2}
In this simulation, we generate data following the framework described in Simulation Design II. The sample size is set to $n = 200$, with $p = 500$ features and $q = 5$ outcome components. The set of true features is specified as $\mathcal{D}=\{X_1, X_{101}, X_{201}, X_{301}\}$, while the remaining $496$ features are treated as noise. Simulation performance is evaluated across 100 replicated simulation datasets.

\begin{figure}[h!]
    \centering
    \includegraphics[scale=0.2]{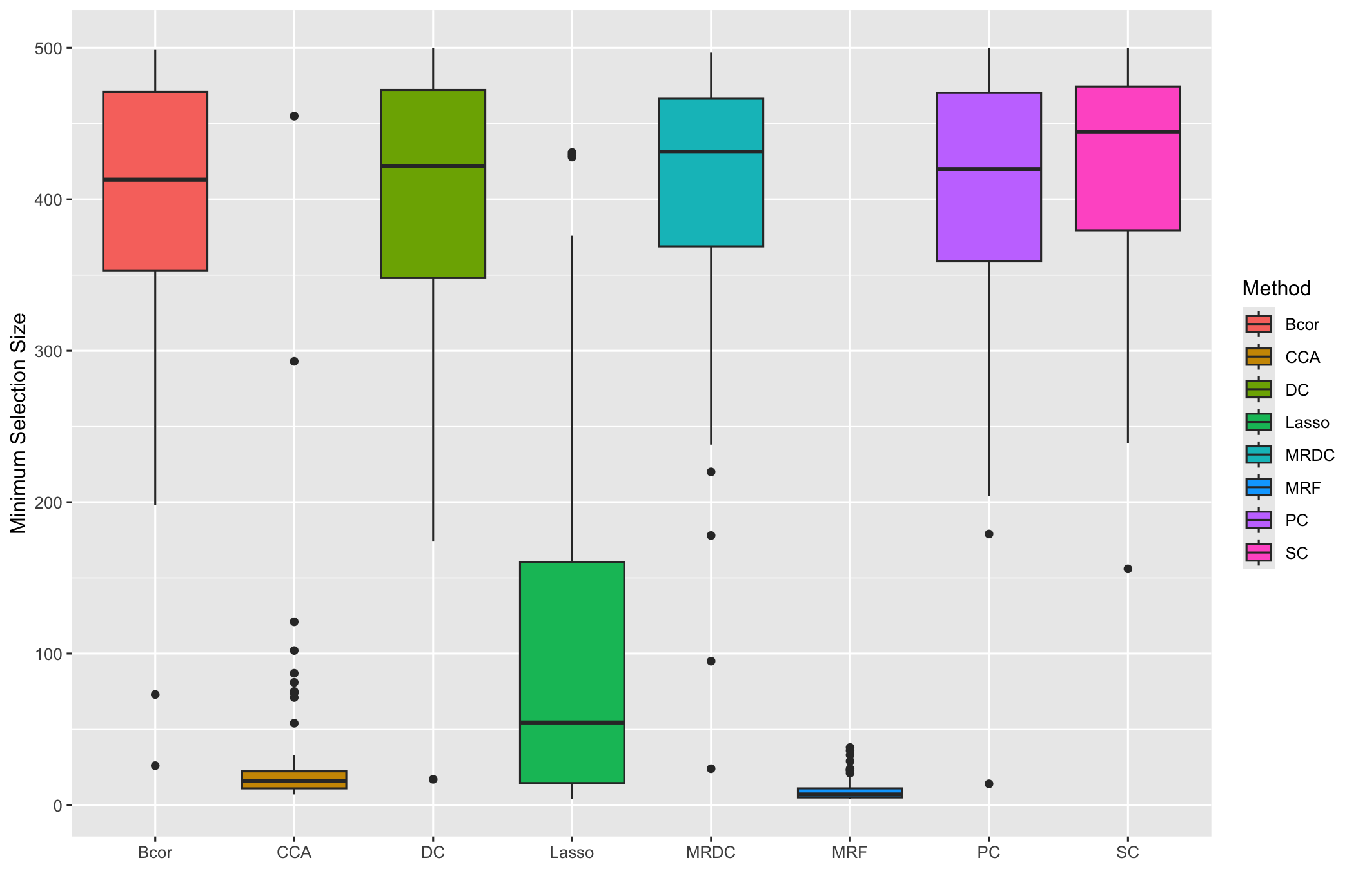}
    \vspace{-1em} 
    \caption{Boxplots of the minimum selection size $S$ in Simulation Study 2 across 100 replicated simulation datasets. The eight colors represent the eight approaches. }
    \label{fig:Fig.6}
\end{figure}

\begin{table}[h!]
\centering
\caption{
The minimum selection size $S$ for Simulation Study 2. The first row shows the mean of $S$, and the second row shows the standard error of $S$ across 100 replicated simulation datasets.
}
\label{tab:Table4}
\begin{tabular}{lcccccccc}
\hline
         & Bcor-SIS & CCA   & DC-SIS & LASSO  & MrDc-SIS & PC-Screen & SC-SIS & MRF \\ \hline
Mean of $S$ & 394.27   & 28.67 & 395.65 & 107.37 & 403.77   & 399.05    & 415.55 & \textbf{9.66} \\
SE of $S$  & 93.16    & 54.68 & 89.65  & 119.56 & 89.10    & 89.40     & 76.53  & \textbf{6.80} \\ \hline
\end{tabular}
\end{table}

\begin{table}[h!]
\caption{
Individual success rate ($P_s$) and overall success rate ($P_a$) for all eight methods in the Simulation Study 2.}
\label{tab:Table5}
\centering
\scalebox{0.9}{
\begin{tabular}{|cccc|c|cccc|c|}
\hline
\multicolumn{5}{|c|}{\textbf{Bcor-SIS}} & \multicolumn{5}{c|}{\textbf{CCA}} \\ \hline
\multicolumn{4}{|c|}{$P_s$} & $P_a$ & \multicolumn{4}{c|}{$P_s$} & $P_a$ \\ \hline
$X_1$ & $X_{101}$ & $X_{201}$ & $X_{301}$ & All &
$X_1$ & $X_{101}$ & $X_{201}$ & $X_{301}$ & All \\
0.01  & 0.02      & 0.02      & 0.04      & 0.00 &
1.00  & 1.00      & 0.89      & 0.51      & 0.51 \\ \hline

\multicolumn{5}{|c|}{\textbf{DC-SIS}} & \multicolumn{5}{c|}{\textbf{LASSO}} \\ \hline
\multicolumn{4}{|c|}{$P_s$} & $P_a$ & \multicolumn{4}{c|}{$P_s$} & $P_a$ \\ \hline
$X_1$ & $X_{101}$ & $X_{201}$ & $X_{301}$ & All &
$X_1$ & $X_{101}$ & $X_{201}$ & $X_{301}$ & All \\
0.02  & 0.03      & 0.02      & 0.02      & 0.01 &
1.00  & 1.00      & 1.00      & 0.95      & 0.95 \\ \hline

\multicolumn{5}{|c|}{\textbf{MrDc-SIS}} & \multicolumn{5}{c|}{\textbf{PC-Screen}} \\ \hline
\multicolumn{4}{|c|}{$P_s$} & $P_a$ & \multicolumn{4}{c|}{$P_s$} & $P_a$ \\ \hline
$X_1$ & $X_{101}$ & $X_{201}$ & $X_{301}$ & All &
$X_1$ & $X_{101}$ & $X_{201}$ & $X_{301}$ & All \\
0.03  & 0.02      & 0.02      & 0.02      & 0.00 &
0.03  & 0.02      & 0.02      & 0.03      & 0.01 \\ \hline

\multicolumn{5}{|c|}{\textbf{SC-SIS}} & \multicolumn{5}{c|}{\textbf{MRF}} \\ \hline
\multicolumn{4}{|c|}{$P_s$} & $P_a$ & \multicolumn{4}{c|}{$P_s$} & $P_a$ \\ \hline
$X_1$ & $X_{101}$ & $X_{201}$ & $X_{301}$ & All &
$X_1$ & $X_{101}$ & $X_{201}$ & $X_{301}$ & All \\
0.00  & 0.01      & 0.02      & 0.03      & 0.00 &
1.00  & 1.00      & 0.97      & 0.74      & \textbf{0.74} \\ \cline{1-10}
\end{tabular}
}
\end{table}

\begin{figure}[h!]
    \centering
    \includegraphics[scale=0.2]{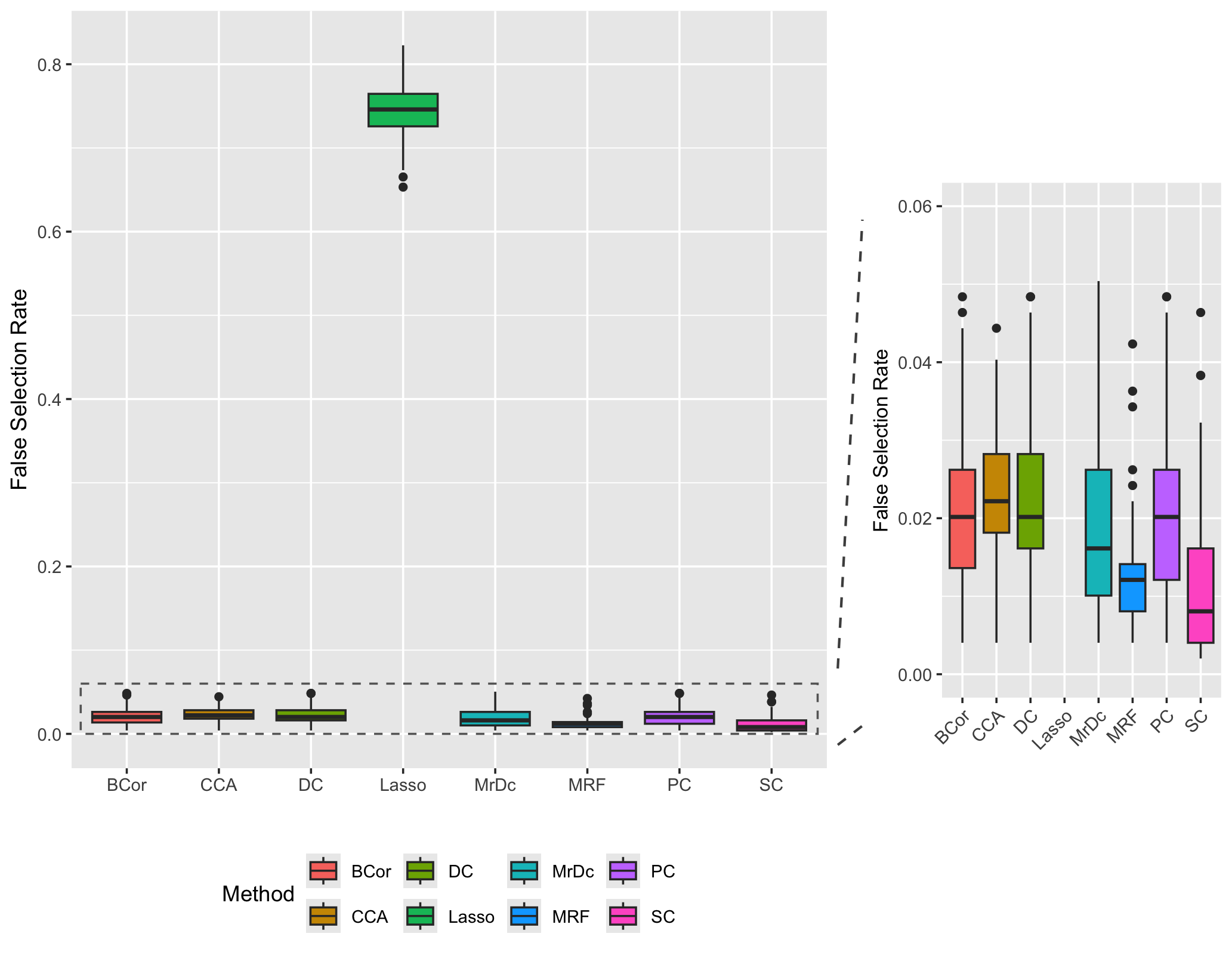}
    \vspace{-1em} 
    \caption{Boxplots of the false selection rate $T$ for Simulation Study 2 across 100 replicated datasets. Left panel: Results for all eight approaches (each color corresponds to one method). Right panel: Zoomed-in results for the seven approaches excluding LASSO.}
    \label{fig:Fig.8}
\end{figure}

Figures~\ref{fig:Fig.6} and \ref{fig:Fig.8}, together with Tables~\ref{tab:Table4} and \ref{tab:Table5}, summarize the performance of the eight approaches in Simulation Study 2. Specifically, Figure~\ref{fig:Fig.6} and Table~\ref{tab:Table4} report the minimum number of features each approach must select in order to recover all four true features. The five SIS approaches perform poorly, with average minimum selection sizes exceeding 390, implying that at least 380 additional noise features must be selected on average to match the selection accuracy of MRF. By contrast, MRF successfully recovers all four true features with an average rank of 9.66 out of 500 features across 100 replications, while CCA is the next best method, with an average minimum selection size of 28.67. Table~\ref{tab:Table5} further shows that CCA achieves an overall success rate $P_a$ of 51\%, whereas MRF reaches a higher $P_a$ of 74\%, with all individual success rates $P_s$ exceeding those of CCA. As expected, LASSO attains the highest overall success rate ($P_a = 95\%$), but this performance is offset by an inflated false selection rate, averaging 74\%. The boxplots of false selection rates ($T$) in Figure~\ref{fig:Fig.8} illustrate that MRF and SC-SIS achieve the lowest false selection rates, with MRF also exhibiting the narrowest interquartile range, underscoring its efficiency in avoiding false selections.

\subsubsection{Simulation Study 3}
The superior performance of MRF in feature selection has already been demonstrated in the first two simulation studies. However, when applying MRF directly to the real-world GWAS human facial morphology dataset with 453,273 SNPs in categorical data type, MRF breaks down. It is under expectation given that it is a joint approach that simultaneously considers all features in the same model. This inherently makes MRF more computationally demanding than independence screening methods that evaluate each feature individually, especially when the number of features $p$ is extremely large. To address this challenge and better prepare for real-data analysis, we explore an additional simulation study that is closer to the real data. In this Simulation Study 3, we build upon Simulation Design II with human facial morphology images as the outcome and multiple layers of heterogeneity, complexity, and randomness. Instead, we increase the feature dimensionality to $p = 100{,}000$ to reflect the high-dimensional nature of the real data. All features are discretized into four categories: 0 (aa), 1 (Aa or aA), 2 (AA), and 3 (missing). Due to the substantial computational burden, this simulation study is evaluated using only 10 replicated datasets.

To make MRF feasible in ultrahigh-dimensional settings, we incorporate a pre-screening step. LASSO is a natural choice given its consistently high overall success rates—98\% in Table \ref{tab:Table2}, 90\% in Table \ref{tab:Table3}, and 95\% in Table \ref{tab:Table5}—indicating that it rarely misses true features. Although LASSO is prone to high false selection rates, this characteristic is advantageous in the context of pre-screening: \textcolor{black}{by keeping a much larger set of candidates, it increases the chance that true features are retained, leaving MRF to distinguish signal from noise and refine the selection.} 

We evaluate the effectiveness of this pre-screening strategy under two scenarios:
\begin{itemize}
\item One-step scenario: MRF and CCA are each applied directly to the full feature set. With $p = 100{,}000$ features and only four true signals embedded among an overwhelming number of noise features, this setting evaluates whether MRF can still be feasible and able to discover the true features under such a challenging high-dimensional scenario.

\item Two-step scenario: LASSO is first applied to reduce the feature space from $100{,}000$ to $721$ candidates on average, after which MRF and CCA are both performed on the reduced set. This setting examines whether pre-screening improves the performance of MRF relative to its one-step scenario.
\end{itemize}

\begin{figure}[h!]
    \centering
    \includegraphics[scale=0.2]{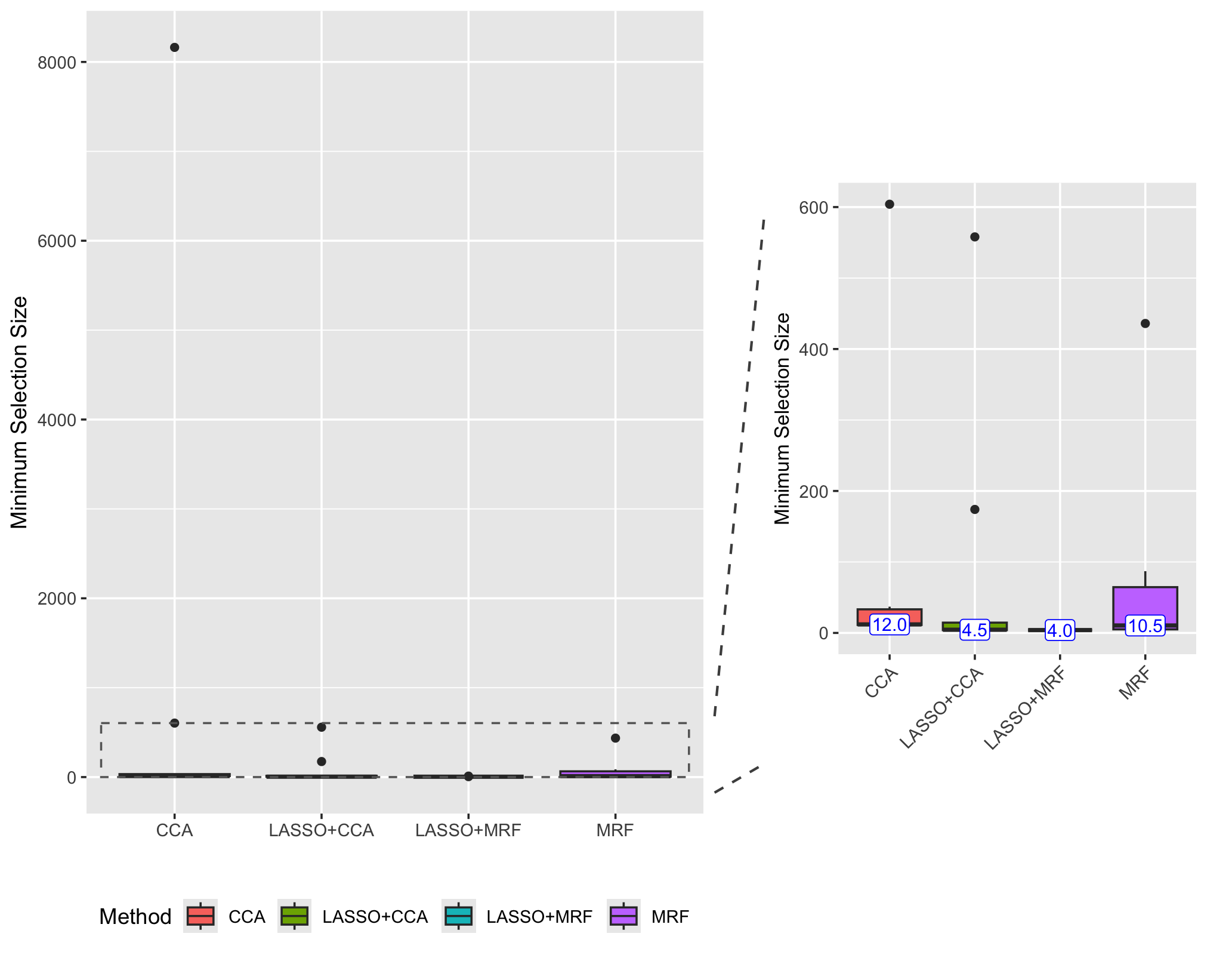}
    \vspace{-1em} 
    \caption{Boxplots of the minimum selection size $S$ in Simulation Study 3 across 10 replicated simulation datasets. Left panel: Results for the two scenarios (each color corresponds to one method). Right panel: Zoomed-in results for the two scenarios, excluding the most extreme outlier value of CCA. }
    \label{fig:Fig.9}
\end{figure}

\begin{table}[h!]
\centering
\caption{The minimum selection size $S$ for the Simulation Study 3. The first row shows the mean of $S$, and the second row shows the standard error of $S$ across 10 replicated simulation datasets.}
\label{tab:Table6}
\begin{tabular}{ccccc}
\hline
       & CCA & LASSO + CCA   & LASSO + MRF & MRF \\ \hline
Mean of $S$ & 888.1   & 78.1  & {\bf4.8} & 67.9\\
SE of $S$  & 2562.9  & 176.7  & {\bf 2.2} & 132.9\\ \hline
\end{tabular}
\end{table}

\begin{table}[h!]
\centering
\caption{
Individual success rate ($P_s$) and overall success rate ($P_a$) for four methods in the Simulation Study 3.
}
\label{tab:Table7}
\begin{tabular}{|cccc|c|cccc|c|}
\hline
\multicolumn{5}{|c|}{\textbf{CCA}} & \multicolumn{5}{c|}{\textbf{MRF }} \\ \hline
\multicolumn{4}{|c|}{$P_s$} & $P_a$ & \multicolumn{4}{c|}{$P_s$} & $P_a$ \\ \hline
$X_1$ & $X_{101}$ & $X_{201}$ & $X_{301}$ & All &
$X_1$ & $X_{101}$ & $X_{201}$ & $X_{301}$ & All \\
1.00  & 1.00      & 1.00      & 0.20      & 0.20 &
1.00  & 1.00      & 1.00      & 0.40      & 0.40 \\ \hline
\multicolumn{5}{|c|}{\textbf{LASSO + CCA}} & \multicolumn{5}{c|}{\textbf{LASSO + MRF }} \\ \hline
\multicolumn{4}{|c|}{$P_s$} & $P_a$ & \multicolumn{4}{c|}{$P_s$} & $P_a$ \\ \hline
$X_1$ & $X_{101}$ & $X_{201}$ & $X_{301}$ & All &
$X_1$ & $X_{101}$ & $X_{201}$ & $X_{301}$ & All \\
1.00  & 1.00      & 1.00      & 0.70      & 0.70 &
1.00  & 1.00      & 1.00      & 0.90      & \textbf{0.90} \\ \hline
\end{tabular}
\end{table}

\begin{figure}[h!]
    \centering
    \includegraphics[scale=0.18]{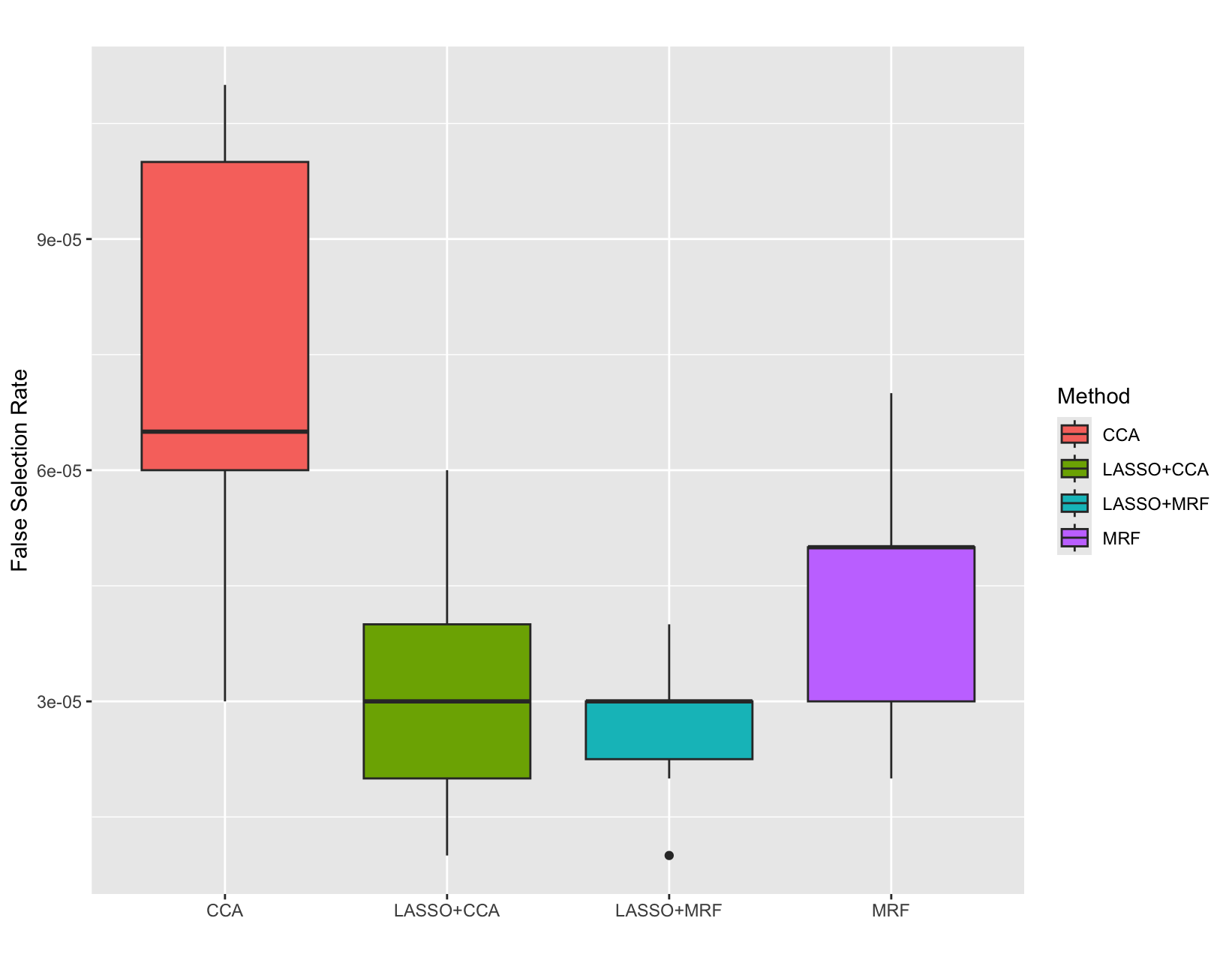}
    \caption{Boxplots of the false selection rate $T$ for Simulation Study 3 across 10 replicated simulation datasets.}
    \label{fig:Fig.10}
\end{figure}

Figures~\ref{fig:Fig.9} and \ref{fig:Fig.10}, together with Tables~\ref{tab:Table6} and \ref{tab:Table7}, summarize the results of the one-step and two-step scenarios in the Simulation Study 3. Figure~\ref{fig:Fig.9} and Table~\ref{tab:Table6} demonstrate a clear and substantial advantage of LASSO+MRF over the other three methods. This approach achieves an average minimum selection size of only 4.8 (standard error 2.2) to discover all four true features, just 7.1\% (= 4.8/67.9) of the size required by MRF without pre-screening. Somewhat unexpectedly, CCA suffers a sharp performance decline in this setting, likely due to the inherent difficulties of ultrahigh-dimensional categorical data type.

We highlight four findings from this comparison. (1) The two-step approaches achieve markedly smaller average minimum selection sizes than their one-step counterparts (4.8 vs.~67.9 for MRF; 78.1 vs.~888.1 for CCA). (2) The \textcolor{black}{standard errors} of minimum selection sizes are also dramatically reduced with pre-screening (2.2 vs.~132.9 for MRF; 176.7 vs.~2562.9 for CCA). (3) MRF consistently requires smaller selection sizes than CCA (67.9 vs.~888.1 without pre-screening; 4.8 vs.~78.1 with pre-screening). (4) MRF also attains substantially lower standard errors than those of CCA (132.9 vs.~2562.9 without pre-screening; 2.2 vs.~176.7 with pre-screening). In addition, the LASSO pre-screening step improves success rates for both methods. Specifically, the overall success rate $P_a$ of CCA increases from 20\% to 70\%, while that of MRF rises from 40\% to 90\%. Figure~\ref{fig:Fig.10} provides additional evidence on false selection rates: LASSO+MRF exhibits the narrowest interquartile range, underscoring its efficiency and accuracy in avoiding false selections.

In summary, these experiments underscore the practical advantages of incorporating LASSO as a pre-screening step in ultrahigh-dimensional applications, supporting its utility in our real data analysis: it effectively reduces the dimensionality of the candidate pool while preserving as many truly \textcolor{black}{influential} features as possible, thereby mitigating the curse of dimensionality and enhancing the effectiveness and power of subsequent joint selection by MRF. 

\section{Real Data Analysis}
The real dataset consists of high-resolution 3D facial images from 2,342 human subjects (890 males and 1,452 females). Each facial morphology is represented as a 3D point cloud comprising 7,160 quasi-landmarks, with XYZ coordinates recorded for each point \citep{Claes2012, Claes2014}. We apply alignment procedures, including translation, scaling, and rotation, to all images, a pre-processing step essential for minimizing pose-related variation and thereby revealing variation attributable to genetic factors.

 Following \citet{Claes2018}, we apply PCA to reduce the dimensionality of the original high-resolution images and only the top 50 PC scores are retained for each subset, capturing approximately $93\%$ of the total variation of facial morphologies in both male and female groups. These PC scores are utilized as the random outcome vector for our selection process. The feature dataset poses a substantial challenge due to its ultrahigh dimensionality. The original candidate pool consists of $9,478,608$ SNPs after imputation. Following standard procedures in the GWAS literature, we first apply a series of quality-control filters \citep{marees2018tutorial}. Specifically, SNPs with imputation quality below an INFO score of $0.5$ are excluded. Next, a genotype-level filter is imposed, removing genotype calls with posterior probability less than $0.9$. Monomorphic SNPs are also discarded. To further reduce redundancy and address multicollinearity caused by strong correlations among neighboring variants, linkage disequilibrium (LD) pruning is conducted by retaining only one SNP from each highly correlated block with $r^2 > 0.8$. In addition, we conduct a comprehensive literature search to ensure that all SNPs previously reported to be associated with human facial morphology are included in the initial candidate pool, thereby reducing the possibility of excluding true signals prior to the feature selection stage. After applying these pre-processing steps, a number of $453,273$ features is obtained for the subsequent selection. The MRF model, as a joint model, breaks down to such ultrahigh dimensionality. Guided by the strategies from the Simulation Study 3, we implement a LASSO-based pre-screening step, which reduces the number of features to $66,979$ for the male subset and $97,786$ for the female subset.

\begin{figure}[p]
    \centering
    \includegraphics[scale=0.2]{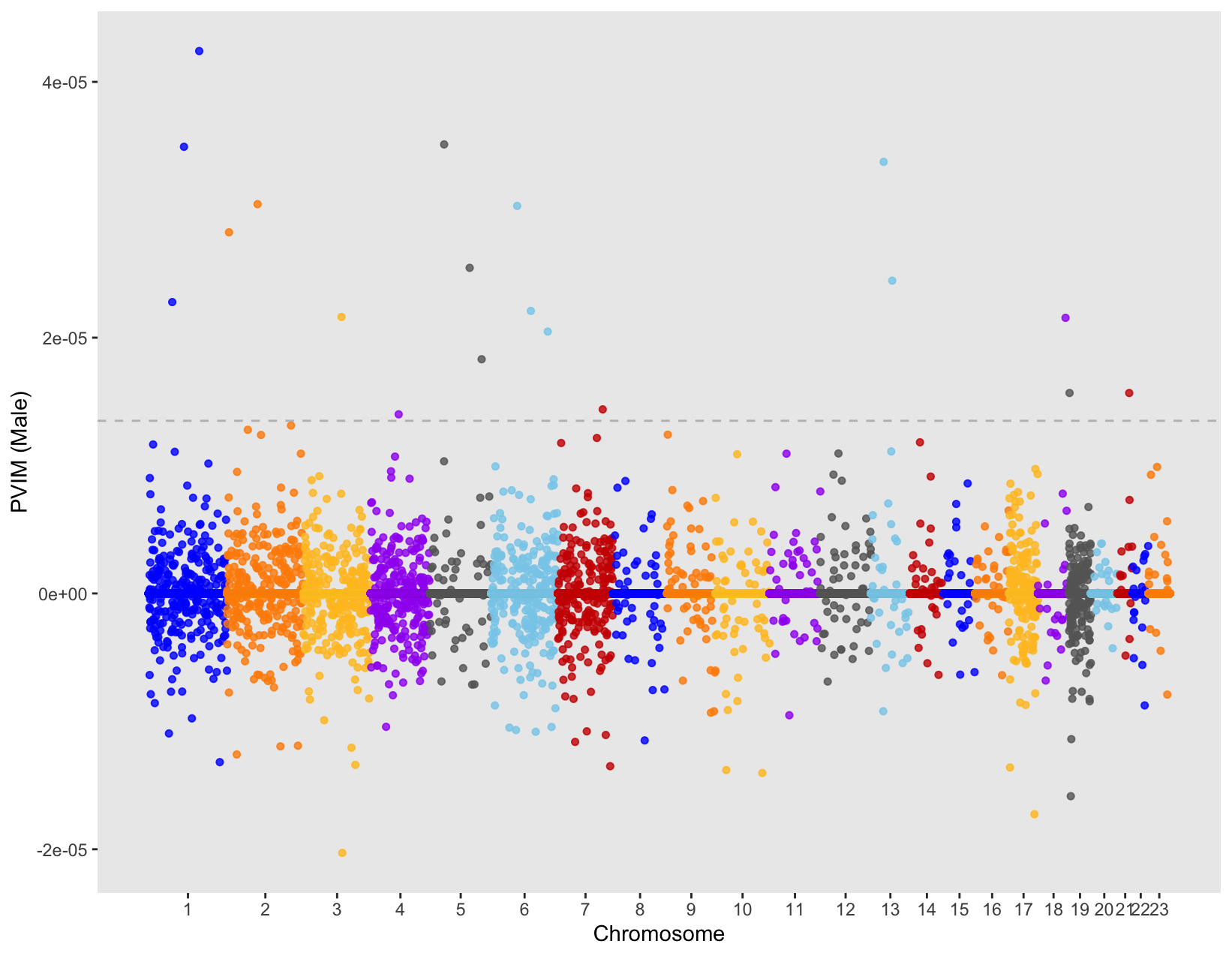}
    \caption{Permutation-based variable importance measures for all features in the male group. \textcolor{black}{SNPs excluded by LASSO during pre-processing are added back with PVIM values set to zero.} A total of 19 SNPs exceed the threshold of $1.35 \times 10^{-5}$.}
    \label{fig:Fig.11}
    \includegraphics[scale=0.2]{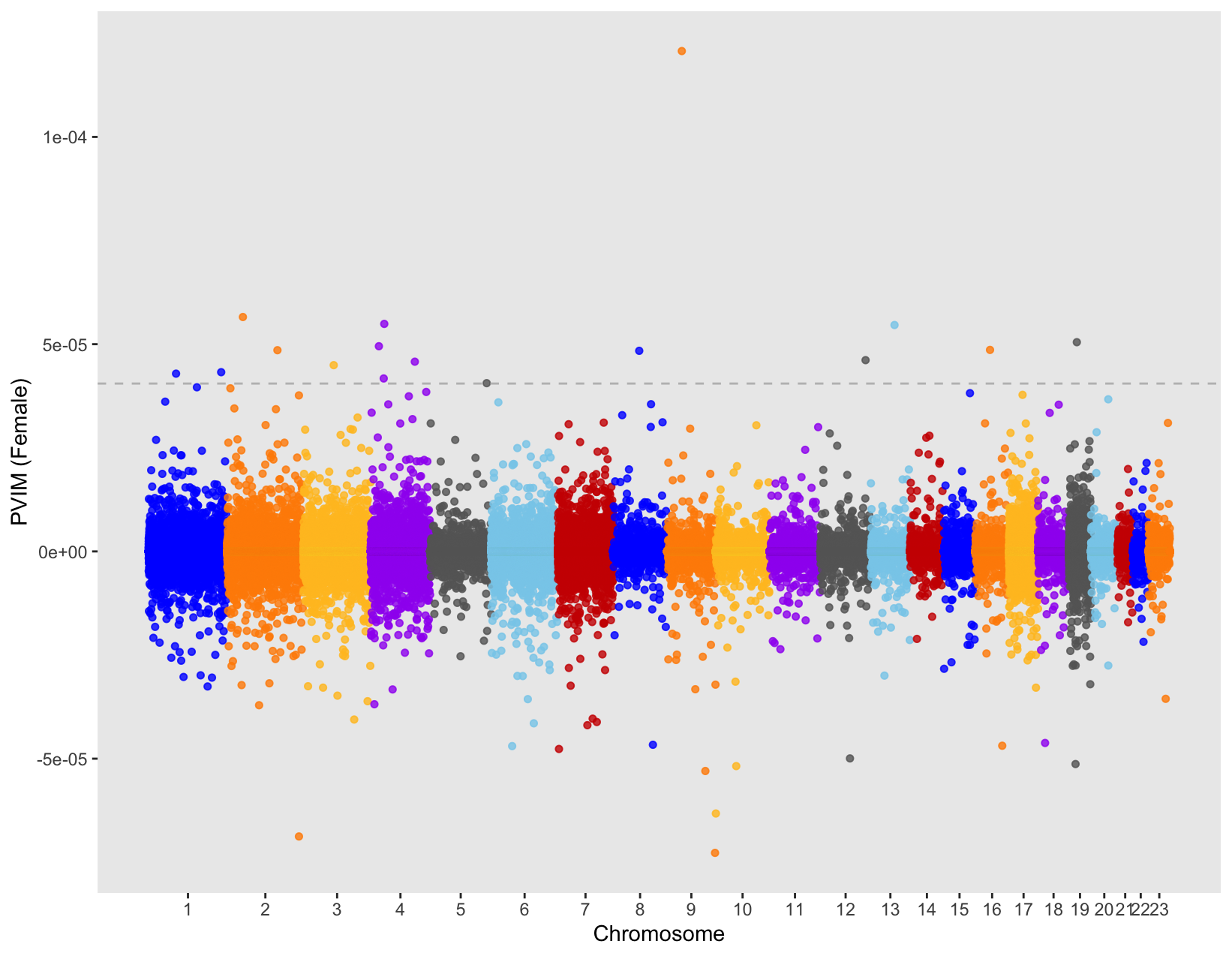}
    \caption{Permutation-based variable importance scores for all features in the female group. \textcolor{black}{SNPs excluded by LASSO during pre-processing are added back with PVIM values set to zero.} A total of 16 SNPs exceed the threshold of $4.05 \times 10^{-5}$.}
    \label{fig:Fig.12}
\end{figure}

\begin{table}[h!]
\centering
\caption{Important SNPs selected by the MRF procedure, with candidate genes identified through functional annotation. \textcolor{black}{Red (blue) highlights indicate individual genes (gene–gene interactions) supported by prior independent studies in the literature, conducted using completely different datasets, methodologies, and experimental systems, including molecular studies.}}
\label{tab:Table8}
\begin{tabular}{ccc|ccc}
\hline
\multicolumn{3}{c|}{Male}                & \multicolumn{3}{c}{Female}               \\ \hline
Ch. & SNP        & Candidate gene & Ch. & SNP        & Candidate gene \\ \hline
1  & rs6608458   & \textcolor{red}{MROH9}         & 1  & rs505867   & PIGK          \\
1  & rs17648010  & SASS6          & 1  & rs283694   & \textcolor{blue}{RXRG}          \\
1  & rs4423051   & CACHD1         & 2  & rs4670733  & LINC00211     \\
2  & rs10931818  & \textcolor{red}{SATB2}          & 2  & rs10169306 & CACNB4        \\
2  & rs4252008   & \textcolor{blue}{ACTR3}          & 3  & rs6770667  & PROK2         \\
3  & rs13061458  & GPD1L          & 4  & rs1714058  & ARAP2         \\
4  & rs55866514  & FRAS1          & 4  & rs207305   & ERCC5         \\
5  & rs6872795   & SPEF2          & 4  & rs13137679 & DCLK2         \\
5  & rs1990962   & PRDM6          & 4  & rs17029777 & \textcolor{red}{DDIT4L}        \\
5  & rs2546375   & LOC285627      & 5  & rs702098   & RANBP17       \\
6  & rs13194549  & GPR111         & 8  & rs1871217  & CLVS1         \\
6  & rs211215    & FBXL4          & 9  & rs1412378  & TUSC1         \\
6  & rs661005    & \textcolor{blue}{PLEKHG1}        & 12 & rs10773092 & \textcolor{blue}{NCOR2}         \\
7  & rs10954341  & PODXL          & 13 & rs7332088  & LINC00331     \\
13 & rs842405    & HTR2A          & 16 & rs185375   & TOX3          \\
13 & rs7981863   & KLF12          & 19 & rs438421   & IFI30         \\
18 & rs12456915  & OTX2           &    &            &               \\
19 & exm1398358  & GRIN3B         &    &            &               \\
21 & rs2837768   & DSCAM          &    &            &               \\ \hline
\end{tabular}
\end{table}

After computing PVIM scores from the MRF, thresholds are determined following the same rule used in the simulation studies. That says, features are retained only if they rank within the top 5\% by PVIM and simultaneously among the top five max-ratio thresholds. This procedure yields thresholds of $0.000013$ for the male group and $0.00004$ for the female group. Figures \ref{fig:Fig.11} and \ref{fig:Fig.12} display the PVIM distributions for all features in the male and female groups, respectively. In total, 19 SNPs in the male group and 16 SNPs in the female group surpass their respective thresholds. Table~\ref{tab:Table8} lists the SNPs identified by the proposed MRF procedure. The corresponding candidate genes are annotated using the NCBI dbSNP database%
\footnote{\url{https://www.ncbi.nlm.nih.gov/snp}} 
and the UCSC Genome Browser%
\footnote{\url{https://www.genome.ucsc.edu/}}. 
For SNPs not located within any annotated gene region, 
the nearest protein-coding gene is reported as the candidate gene. Most of our findings are novel, \textcolor{black}{with some supported by prior studies.} Specifically, three genes, $SATB2$, $MROH9$, and $DDIT4L$, have been independently identified in the literature (highlighted in red), while two gene–gene interactions, $ACTR3 \times PLEKHG1$ and $RXRG \times NCOR2$, have also been previously documented (highlighted in blue). Variants in $SATB2$ are known to influence midface height, jaw structure, nose morphology, and craniofacial dysmorphology \citep{Pickrell2016, Rainger2014, Conte2016, SheehanRooney2010}. Variants in $MROH9$ have been linked to traits such as nose width, cheekbone structure, mouth morphology, and jaw morphology \citep{White2021}, while $DDIT4L$ has been associated with nose size \citep{Pickrell2016}. Notably, the two interactions detected by our model are supported by evidence from molecular studies. The interaction $ACTR3 \times PLEKHG1$ was identified by \citet{Go2021} using a proximity-based proteomic mapping approach (BioID) that captures spatially co-localized proteins. In contrast, the $RXRG \times NCOR2$ interaction was identified by \citet{Albers2005} using yeast two-hybrid screening, indicating direct physical binding between the proteins.

\begin{table}[h!]
\caption{Important two-way gene–gene interactions identified by MRF using the interactive-PVIM score. In the male subset, the top three interactions all involve $GPD1L$, while in the female subset, the top four interactions all involve $GNG4$.}
\label{tab:Table9}
\scalebox{0.72}{
\begin{tabular}{lc|lc}
\hline
\multicolumn{2}{c|}{Male} & \multicolumn{2}{c}{Female} \\ \hline
\multicolumn{1}{c}{Interaction} & Interactive-PVIM  & \multicolumn{1}{c}{Interaction} & Interactive-PVIM \\ \hline
rs13061458 × rs6608458 (GPD1L × MROH9)  & 0.04290 & rs10926189 × rs1412378 (GNG4 × TUSC1)   & 0.04655 \\
rs13061458 × rs6872795 (GPD1L × SPEF2)  & 0.04289 & rs10926189 × rs13137679 (GNG4 × DCLK2)  & 0.04274 \\
rs13061458 × rs17648010 (GPD1L × SASS6) & 0.04289 & rs10926189 × rs10773092 (GNG4 × NCOR2)  & 0.04256 \\
                                       &          & rs10926189 × rs6770667 (GNG4 × PROK2)   & 0.04195 \\ \hline
\end{tabular}}
\end{table}

Although tree structures naturally capture interaction effects among features along each branch, it is difficult to directly interpret such interactions for MRF because the final prediction aggregates across a large number of trees, each containing several branches. Consequently, when a feature attains a high PVIM score, it is often unclear whether this reflects a strong main effect or relatively weak main effects combined with strong interactions with other features. Thus, a high PVIM score in MRF may indicate either scenario. To better evaluate two-way interactions, we compute interactive-PVIM scores using the R package \texttt{randomForestSRC} \textcolor{black}{for $66{,}979$ SNPs in the male subset and $97{,}786$ SNPs in the female subset.} For a given feature pair, this score is obtained by permuting both features jointly as a unit and calculate their joint-PVIM. The interactive-PVIM is then defined as the absolute difference between the joint-PVIM for the pair and the sum of their individual PVIM scores. Table~\ref{tab:Table9} reports the top seven pairs of two-way interactions identified by MRF based on this interactive-PVIM measure. Notably, in both the male and female groups, the strongest interactive-PVIM signals involve one gene interacting with multiple partners. In males, the top three pairs all involve $GPD1L$, whereas in females, the top four pairs all involve $GNG4$. To the best of our knowledge, the findings in Table~\ref{tab:Table9} are novel. They illustrate the potential of MRF to uncover hub-gene interactions, i.e., an important biological phenomenon in which a single gene engages in multiple gene–gene interactions. This can reveal new insights into the genetic architecture of biomedical traits. 

\section{Discussion}

 Recent literature has made notable progress on univariate random forests (URF). In particular, consistency of URF predictions has been established \citep{Biau2012, Scornet2015}. \citet{Mentch2016} derived formal statistical inference, including confidence intervals and hypothesis testing, for both predictions and variable importance in URF. While these advances have substantially deepened understanding of URF, comparable theoretical guarantees for MRF, particularly in the context of feature selection, remain scarce.

 {\color{black}
In addition to the permutation-based variable importance measure, a few alternative importance metrics for MRF have been proposed. For example, naive importance methods assess how frequently a feature is used in splitting, while split-based approaches aggregate reductions in prediction error on OOB samples attributable to each feature at the splits \citep{sikdar2025}. Although these methods provide a structure-driven evaluation of feature importance, they primarily capture a feature’s contribution to local split improvements rather than its impact on overall predictive performance. Moreover, these approaches currently lack theoretical guarantees.}

In this article, we address two key challenges that enable multivariate random forests to serve as an effective and accurate feature selection framework. \textcolor{black}{From a theoretical perspective, we establish the first consistency result for feature selection with multivariate outcomes using MRF in high-dimensional settings}. This provides a theoretical guarantee that MRF will not miss truly \textcolor{black}{influential} features when sample size is sufficiently large. The MRF integrates feature selection directly into a nonparametric, distribution-free learning process and identifies \textcolor{black}{influential} features based on their contributions to predictive accuracy.

From a practical perspective, we experiment a two-step LASSO–MRF strategy to overcome the computational breakdown of MRF in ultrahigh-dimensional settings and to further enhance its efficiency and accuracy. We apply MRF to a genome-wide association study of human facial morphology, a complex trait influenced by a large number of genetic factors and their interactions. Traditional GWAS methods analyze each SNP in isolation and therefore fail to detect SNPs with weak main effects but strong joint effects through interactions. Among independence feature selection approaches, canonical correlation analysis, a widely used GWAS method, was applied to this same dataset in a previous study \citep{Claes2018}. In contrast, the PVIM of MRF simultaneously captures complex dependence structures, including multicollinearity, nonlinear and higher-order interactions among SNPs, and heterogeneity across subjects, by aggregating information across an ensemble of trees. \textcolor{black}{Some of our findings are supported by prior molecular studies, while the majority are novel. Our findings may motivate further molecular and genetic investigations into these detected gene interactions and their role in shaping human facial morphology.}

Several practical challenges remain. First, PVIM is more computationally intensive than independence screening methods; therefore, scaling to massive datasets will require further algorithmic and computational optimization. Second, we currently adopt a max-ratio thresholding rule, which may not meet all application needs. Therefore, developing theoretically grounded procedures for statistical inference, such as p-values or confidence intervals for feature importance, remains an important direction for future work.



\newpage

\appendix
\section{}
\label{app:theorem}

\begin{proof}[Proof of Theorem]
The first goal is to show that $\underset{1 \leq j \leq p}{\max}\mathrm{Var}(\hat{\Lambda}_n^{(j)}) = O(R_n)$. For each $j \in \{1,...,p\}$ we can express $\hat{\Lambda}_n^{(j)}$ as the difference 
$$\hat{\Lambda}^{(j)}_n = \Gamma_{nj} - \Gamma_{n}$$where$$\Gamma_{nj} = \frac{1}{m_n(n-k_n)} \sum_{m=1}^{m_n}\sum_{i =1}^{n} g_{nj}(\mathbf{Z}_i, \mathcal{Z}_{S_m}, \Omega_{mj}) \mathbb{1}_{\{i\notin S_m\}}  $$and 
\[
\Gamma_n = \frac{1}{m_n(n-k_n)} \sum_{m=1}^{m_n}\sum_{i =1}^{n} g_{n}(\mathbf{Z}_i, \mathcal{Z}_{S_m}, \omega_m)\mathbb{1}_{\{i\notin S_m\}} .
\]
Thus we have \[
\underset{1 \leq j \leq p}{\max}\mathrm{Var}(\hat{\Lambda}_n^{(j)}) \leq 2\Bigl(\underset{1 \leq j \leq p}{\max}\mathrm{Var}(\Gamma_{nj}) + \mathrm{Var}(\Gamma_n)\Bigr).
\]
Hence is suffices to prove that,
\[
(i) \, \underset{1 \leq j \leq p}{\max}\mathrm{Var}(\Gamma_{nj}) = O(R_n)\quad \text{and} \quad (ii) \, \mathrm{Var}(\Gamma_{n}) = O(R_n) .
\]
We first establish the result in part $(i)$.
For  $m = 1, \dots, m_n$, $j =1,...,p$, and $i \notin S_m$ define 
\begin{align*}
\tilde{g}_{nj}(\mathbf{Z}_i, \mathcal{Z}_{S_m}, \Omega_{mj}) &= g_{nj}(\mathbf{Z}_i, \mathcal{Z}_{S_m}, \Omega_{mj}) - \bar{g}_{nj}(\mathbf{Z}_i, \mathcal{Z}_{S_m}), \\
\tilde{e}_{nj}(\mathbf{Z}_i, \mathcal{Z}_{S_m}) &= \bar{g}_{nj}(\mathbf{Z}_i, \mathcal{Z}_{S_m}) - h_{nj}(\mathcal{Z}_{S_m})
    \quad \text{where}, \\
     h_{nj}(\mathcal{Z}_{S_m}) &= \mathbf{E}\Big[\bar{g}_{nj}(\mathbf{Z}_i, \mathcal{Z}_{S_m}) \, | \, \mathbf{Z}_1,...,\mathbf{Z}_{i-1}, \mathbf{Z}_{i+1},...,\mathbf{Z}_n, \mathcal S_{k_n,m_n} \Big].
\end{align*}
Then $\Gamma_{nj}$ can be expressed as a sum:
\begin{equation}
    \Gamma_{nj} = A_{nj} + B_{nj} + U_{nj}
\end{equation}
where
\begin{align*}
    A_{nj} &= \frac{1}{m_n(n-k_n)}\sum_{m=1}^{m_n}\sum_{i=1}^{n} \tilde{g}_{nj}\bigl(\mathbf{Z}_i, \mathcal{Z}_{S_m}, \Omega_{mj}\bigr)\mathbb{1}_{\{i\notin S_m\}}, \\
    B_{nj} &= \frac{1}{m_n(n-k_n)}\sum_{m=1}^{m_n}\sum_{i=1}^{n} \tilde{e}_{nj}\bigl(\mathbf{Z}_i, \mathcal{Z}_{S_m})\mathbb{1}_{\{i\notin S_m\}} ,  \\
    U_{nj} &= \frac{1}{m_n(n-k_n)}\sum_{m=1}^{m_n}\sum_{i=1}^{n} h_{nj}\bigl(\mathcal{Z}_{S_m}) \mathbb{1}_{\{i\notin S_m\}} .
\end{align*}
Now since $A_{nj}$ and $B_{nj}$ are centered we have
\[
 \underset{1 \leq j \leq p}{\max}\mathrm{Var}(\Gamma_{nj}) \leq  3\Bigl( \underset{1 \leq j \leq p}{\max}\mathbf{E}(A_{nj}^2) +  \underset{1 \leq j \leq p}{\max}\mathbf{E}(B_{nj}^2) +  \underset{1 \leq j \leq p}{\max}\mathrm{Var}(U_{nj}) \Bigr).
\]
Thus, we derive suitable asymptotic bounds for the second moments of $A_{nj}, B_{nj}$ and $U_{nj}$.  Specifically, we proceed in three steps and obtain the following results:
\begin{center}
    \begin{minipage}{0.4\textwidth} 
        \begin{itemize}
            \item[(1)] $ \underset{1 \leq j \leq p}{\max}\mathbf{E}(A_{nj}^2) = O\bigl(\frac{n}{m_n}\bigr)$,
            \item[(2)] $ \underset{1 \leq j \leq p}{\max}\mathbf{E}(B^2_{nj}) = O\bigl(\frac{k_n^{\alpha }}{n}\bigr)$,
            \item[(3)] $ \underset{1 \leq j \leq p}{\max}\mathrm{Var}(U_{nj}) = O\bigl(\frac{k_n^{2+\alpha}}{n}\bigr)$.
        \end{itemize}
    \end{minipage}
\end{center}
\noindent    

\noindent
The following notation is used repeatedly in the proof. For a random variable $W$, define $$\mathbf{E}_{\mathcal{S}}(W) := \mathbf{E}[W \, | \, \mathcal S_{k_n,m_n}] \quad \text{and} \quad \mathbf{E}_{ \mathbf{Z}, \mathcal{S}}(W) = \mathbf{E}[W \, | \, \mathbf{Z}_1,...,\mathbf{Z}_n, \mathcal S_{k_n,m_n}].$$

\noindent
\textbf{{Proof of (1).}}
For $j =1,...,p$, write
\begin{align*}
  A_{nj} &= \frac{1}{m_n}\sum_{m =1}^{m_n} \rho_{mj} \quad \text{with,} \quad
  \rho_{mj} = \frac{1}{n-k_n}\sum_{i =1}^{n} \tilde{g}_{nj}(\mathbf{Z}_i, \mathcal{Z}_{S_m}, \Omega_{mj}) \mathbb{1}_{\{i\notin S_m\}}.
\end{align*}
Observe that since $\Omega_1,...,\Omega_{m_n}$ are i.i.d,  the random variables \textcolor{black}{ $\rho_{1j},...,\rho_{m_nj}$ }are conditionally independent and centered given $\mathbf{Z}_1,...,\mathbf{Z}_n$ and $\mathcal S_{k_n,m_n}$. 
Hence, for $m \neq l$ with $m,l = 1,\dots,m_n$, we have
\[
\mathbf{E}_{\mathbf Z,\mathcal S}(\rho_{mj} \rho_{lj})
=
0 .
\]
This implies that $\mathbf{E}(\rho_{mj} \rho_{lj}) =0$ for all $m,l =1,..,m_n$ and $m \neq l$. Also since $\mathbf Z_1,\dots,\mathbf Z_n $ are i.i.d,
$S_1,\dots,S_{m_n} \overset{\mathrm{iid}}{\sim} \mathrm{Unif}(\mathcal A_{n,k_n})$, and $\Omega_1,\dots,\Omega_{m_n}$ are i.i.d., it follows that
$\rho_{1j},..,\rho_{m_nj}$ are identically distributed. Thus  $\rho_{1j},..,\rho_{m_nj}$ are pairwise uncorrelated, centered and identically distributed random features.  
Then for any $m =1,..,m_n$, $\mathbf{E}(\rho_{mj}^2) = \mathbf{E}(\rho_{1j}^2)$. Further by Cauchy-Schwartz inequality
\[
\mathbf{E}(\rho_{1j}^2)
\leq
\mathbf{E}\!\left[
\tilde g_{nj}^2(\mathbf Z_{k_n+1}, \mathbf Z_1, \dots, \mathbf Z_{k_n}, \Omega_1)
\right].
\]
Hence, by condition $\textit{(C1)}$,
\[
 \max_{1 \leq j \leq p} \mathbf{E}(\rho_{1j}^2)= O(n).
\]
\noindent
Therefore 
\begin{align}
   \max_{1 \leq j \leq p} \mathbf{E}\Big[A_{nj}^2\Big]  
    = \frac{1}{m_n} \max_{1 \leq j \leq p}\mathbf{E}(\rho_{1j}^2) 
    = O\Bigl(\frac{n}{m_n}\Bigr).
\end{align}


\noindent
\textbf{{Proof of (2).}}
We begin with some preliminary observations. Since $(\mathbf{Z}_1,...,\mathbf{Z}_n)$ are i.i.d., for all $i = 1,...,n$ and $m = 1,...,m_n$ with $i \notin S_m$, the following holds:\[\tilde{e}_{nj}(\mathbf{Z}_i, \mathcal{Z}_{S_m}) \overset{d}{=} \tilde{e}_{nj}(\mathbf{Z}_{k_n+1},\mathbf{Z}_1,...,\mathbf{Z}_{k_n}).
\]
Note also that $\mathbf{E}\Big[ \tilde{e}_{nj}^2(\mathbf{Z}_{k_n+1},\mathbf{Z}_1,...,\mathbf{Z}_{k_n})\Big] \le \mathbf{E}\Big[ \bar{g}_{nj}^2(\mathbf{Z}_{k_n+1},\mathbf{Z}_1,...,\mathbf{Z}_{k_n})\Big]$; thus, under condition \textit{(C2)}, we have
\[
\max_{1 \leq j \leq p}\mathbf{E}\Big[ \tilde{e}_{nj}^2(\mathbf{Z}_{k_n+1},\mathbf{Z}_1,...,\mathbf{Z}_{k_n})\Big] = O(k_n^{\alpha}).
\]
Further by definition $\mathbf{E}\Bigl(\tilde{e}_{nj}(\mathbf{Z}_i, \mathcal{Z}_{S_m})\Bigr) = 0$ for all $i=1,...,n$; $m=1,...,m_n$ and $j=1,...,p$. 
Now \textcolor{black}{we analyze the conditional expectations
\[
\mathbf{E}_{\mathcal{S}}\!\Big[
\tilde{e}_{nj}(\mathbf{Z}_{i_1}, \mathcal{Z}_{S_m})
\,\tilde{e}_{nj}(\mathbf{Z}_{i_2}, \mathcal{Z}_{S_l}) 
\Big],
\]
for $i_1,i_2 = 1,\ldots,n$ and $m,l = 1,\ldots,m_n$. The following holds by Cauchy-Schwartz inequality,
\begin{align*}
    \mathbf{E}_{\mathcal{S}}\!\Big[\tilde{e}_{nj}(\mathbf{Z}_{i_1}, \mathcal{Z}_{S_m}) \,
    \tilde{e}_{nj}(\mathbf{Z}_{i_2}, \mathcal{Z}_{S_l}) \Big]
&\leq \mathbf{E}_{\mathcal{S}}\left[\tilde{e}_{nj}^2(\mathbf{Z}_{k_n+1},\mathbf{Z}_1,...,\mathbf{Z}_{k_n})  \right] \\
&= \mathbf{E}\left[\tilde{e}_{nj}^2(\mathbf{Z}_{k_n+1},\mathbf{Z}_1,...,\mathbf{Z}_{k_n})  \right].
\end{align*}
for all  $m,l = 1,\ldots,m_n$ and for all $i_1 \notin S_m, i_2 \notin S_l$.
However most of these terms are zero as can be seen in the following cases.} 
\begin{itemize}
        \item[(a1)] If $i_2 \notin \{i_1\} \cup S_m \cup S_l $ and \textcolor{black}{$i_1 \notin S_m$} then
    \begin{align*}
     &\, \mathbf{E}_{\mathcal{S}}\!\Big[\tilde{e}_{nj}(\mathbf{Z}_{i_1}, \mathcal{Z}_{S_m}) \,
    \tilde{e}_{nj}(\mathbf{Z}_{i_2}, \mathcal{Z}_{S_l}) \Big]\\
    = & \, \mathbf{E}_{\mathcal{S}}\!\Big[
        \mathbf{E}\Big[\tilde{e}_{nj}(\mathbf{Z}_{i_1}, \mathcal{Z}_{S_m}) \,
        \tilde{e}_{nj}(\mathbf{Z}_{i_2}, \mathcal{Z}_{S_l})
        \,\big|\, \mathbf{Z}_1,..,\mathbf{Z}_{i_2-1}, \mathbf{Z}_{i_2+1},..,\mathbf{Z}_n,\mathcal S_{k_n,m_n}  \Big]
    \Big].\\
    = & \, \mathbf{E}_{\mathcal{S}}\!\Big[ \tilde{e}_{nj}(\mathbf{Z}_{i_1}, \mathcal{Z}_{S_m})
        \mathbf{E}\Big[\tilde{e}_{nj}(\mathbf{Z}_{i_2}, \mathcal{Z}_{S_l}) \big|\, \mathbf{Z}_1,..,\mathbf{Z}_{i_2-1}, \mathbf{Z}_{i_2+1},..,\mathbf{Z}_n, \mathcal S_{k_n,m_n}\Big] 
    \Big]\\
    = & \, 0.
\end{align*}
          
 \item[(a2)] If $\,i_1 \notin \{i_2\} \cup S_{m} \cup S_l $ and \textcolor{black}{$i_2 \notin S_l$} then by an argument symmetric to that in (a1), it follows that
 \[\mathbf{E}_{\mathcal{S}}\!\Big[\tilde{e}_{nj}(\mathbf{Z}_{i_1}, \mathcal{Z}_{S_m}) \,
    \tilde{e}_{nj}(\mathbf{Z}_{i_2}, \mathcal{Z}_{S_l}) \Big] = 0.\]
              
\end{itemize}
    
\noindent
Now let\[
m_n^2 (n-k_n)^2 \mathbf{E}\bigl(B_{nj}^2\bigr)  \notag 
= \mathbf{E}(b_1) + \mathbf{E}(b_2)
\]
where
\begin{align*}
b_1 &= \sum_{l=1}^{m_n} \sum_{m=1}^{m_n} \sum_{i_1 = 1}^{n}\sum_{i_2 = 1}^{n}  \mathbf{E}_{\mathcal{S}}\!\big[\tilde{e}_{nj}(\mathbf{Z}_{i_1}, \mathcal{Z}_{S_m}) \tilde{e}_{nj}(\mathbf{Z}_{i_2}, \mathcal{Z}_{S_{l}})  \big]\, \mathbb{1}_{\{ i_1 \neq i_2 \}} \mathbb{1}_{\{i_1\notin S_m\}} \mathbb{1}_{\{i_2\notin S_l\}}  \\[10pt]
b_2 &= \sum_{l=1}^{m_n} \sum_{m=1}^{m_n} 
     \sum_{i_1 = 1}^{n} \mathbf{E}_{\mathcal{S}}\!\big[\tilde{e}_{nj}(\mathbf{Z}_{i_1}, \mathcal{Z}_{S_m}) \tilde{e}_{nj}(\mathbf{Z}_{i_1}, \mathcal{Z}_{S_{l}}) \big]\, \mathbb{1}_{\{i_1\notin S_m \cup S_l\}}  .
\end{align*}
\textcolor{black}{Based} on the arguments in (a1) and (a2), $b_1$ contains at most \textcolor{black}{$k_n^2m_n^2$} non-zero elements, whereas $b_2$ contains \textcolor{black}{at most} $(n-k_n)m_n^2$ \textcolor{black}{non-zero} elements. Further the absolute value of each non-zero term in $b_1$ and $b_2$ is bounded by $\mathbf{E}\left[\tilde{e}_{nj}^2(\mathbf{Z}_{k_n+1},\mathbf{Z}_1,...,\mathbf{Z}_{k_n}) \right]$. Consequently, each term in $\mathbf{E}(b_1)$ and $\mathbf{E}(b_2)$ also admits the same bound. 
Therefore, \[
\mathbf{E}[B_{nj}^2] \leq 
 \Big(\frac{k_n^2 }{(n-k_n)^2} + \frac{1}{n-k_n} \Big)\mathbf{E}\Big[\tilde{e}^2_{nj}(\mathbf{Z}_{k_n+1}, \mathbf{Z}_1,  \ldots, \mathbf{Z}_{k_n})\Big],\]
and it follows that
\begin{equation} \label{variance_bound}
\begin{split}
\max_{1 \leq j \leq p}\mathbf{E}[B_{nj}^2]&\leq 
 \Big(\frac{k_n^2 }{(n-k_n)^2} + \frac{1}{n-k_n} \Big)\max_{1 \leq j \leq p} \mathbf{E}\Big[\tilde{e}^2_{nj}(\mathbf{Z}_{k_n+1}, \mathbf{Z}_1,  \ldots, \mathbf{Z}_{k_n})\Big] \\
 &= \Big(\frac{k_n^2 }{(n-k_n)^2} + \frac{1}{n-k_n} \Big) O(k_n^{\alpha}) \notag\\
 &= O\Bigl( \frac{k_n^ \alpha}{n} \Bigr) \notag.
\end{split}
\end{equation}
Here the last step follows from the assumption that
$\underset{n \to \infty}{\lim} \frac{k_n^{2+\alpha}}{n} = 0,$
 which implies:
\[\frac{1}{n-k_n} = O\!\left(\frac{1}{n}\right) \quad \text{and} \quad \frac{k_n^2}{n^2} = O\Bigl(\frac{1}{n}\Bigr).\]

\noindent
\textbf{Proof of (3).} Recall, $h_{nj}(\mathcal{Z}_{S_m}) = \mathbf{E}\Big[\bar{g}_{nj}(\mathbf{Z}_i, \mathcal{Z}_{S_m}) \, | \, \mathbf{Z}_1,...,\mathbf{Z}_{i-1}, \mathbf{Z}_{i+1},...,\mathbf{Z}_n, \mathcal S_{k_n,m_n} \Big]$. The function stays the same regardless of order of elements in $\mathcal{Z}_{S_m}$, thus for each $j \in \{1,..,p\}$, $h_{nj}(.)$ is permutation symmetric, and $U_{nj}$ can be thought of as an incomplete U-statistic of order $k_n$ with kernel $h_{nj}$. 

\noindent
The following notation is required to express the variance of a U-statistic. 

For $c =1,...,k_n$ and $j=1,...,p$, define  
\begin{align}
    \zeta^{(j)}_{c,k_n} 
&:= \operatorname{Var} ~\!\Bigl\{\,
\mathbf{E}\!\Big[h_{nj}(\mathbf{Z}_1, \ldots, \mathbf{Z}_{k_n}) 
\,\big|\, 
\mathbf{Z}_1, \ldots, \mathbf{Z}_c\Big]
\Bigr\}.
\end{align}
So, $\zeta^{(j)}_{c,k_n} \geq 0$ for all $c =1,..,k_n$, and the following inequalities hold: 
\[
\zeta^{(j)}_{1,k_n} \;\le\; \cdots \;\le\; \zeta^{(j)}_{k_n,k_n}.
\]
In addition, for each $j \in \{1,..,p\}$, let \(U^*_{nj}\) denote  the corresponding \emph{complete} U-statistic, obtained by averaging over all \(\binom{n}{k_n}\) subsets of size \(k_n\) \textcolor{black}{ (i.e., $\mathcal{A}_{n,k_n}$).}

The variance of an incomplete U-statistic (in our setting, denoted by $U_{nj}$)  
based on $m_n$ subsamples selected uniformly at random with replacement \textcolor{black}{drawn from $\mathcal{A}_{n,k_n}$}, 
is given by   
\begin{equation} \label{incomplete u-stat}
   \mathrm{Var}(U_{nj}) = \frac{\zeta^{(j)}_{k_n,k_n}}{m_n} + \Bigl(1- \frac{1}{m_n} \Bigr) \mathrm{Var}(U^*_{nj}). 
\end{equation}
For details refer to ~\cite{Mentch2016}.
 The variance of the complete U-statistic $U^*_{nj}$ is given by 
\[
\sum_{c=1}^{k_n} \frac{k_n!^2}{c!\,(k_n-c)!^2}\,
\frac{(n-k_n)(n-k_n-1)\cdots(n-2k_n+c+1)}{n(n-1)\cdots(n-k_n+1)}\,
\zeta^{(j)}_{c,k_n}.
\]
The details of the calculations above are shown in  \cite{VanderVaart2000} page 163.

For the current setup, \[ \zeta^{(j)}_{k_n,k_n} = \mathrm{Var}[h_{nj}(\mathbf{Z}_1,...,\mathbf{Z}_{k_n})] \leq \mathbf{E}\Big[\bar{g}_{nj}^2(\mathbf{Z}_{k_n+1},\mathbf{Z}_1,...,\mathbf{Z}_{k_n}) \Big], \] thus condition $ (C2)$ implies
\[
 \underset{1 \leq j \leq p}{\max}\zeta^{(j)}_{k_n,k_n}  = O(k_n^{\alpha}).
\]
Therefore there exists a constant $M$ such that  $ \underset{1 \leq j \leq p}{\max}\zeta^{(j)}_{k_n,k_n} \leq M k_n^\alpha$, and we can bound the variance of \(U^*_{nj}\) as follows:

\begin{align}
\textcolor{black}{\mathrm{Var}(U^*_{nj})} =
  &\sum_{c=1}^{k_n} \frac{k_n!^2}{c!\,(k_n-c)!^2}\,
    \frac{(n-k_n)(n-k_n-1)\cdots(n-2k_n+c+1)}{n(n-1)\cdots(n-k_n+1)}\,
    \zeta^{(j)}_{c,k_n} \notag \\
    \leq   &\sum_{c=1}^{k_n} \frac{1}{c!}\Big(\frac{k_n^c (k_n -c)!}{(k_n -c)!}\Big)^2 \frac{(n-k_n)^{k_n-c}}{(n-k_n)^{k_n}} \zeta^{(j)}_{c,k_n} \\
  \leq\; &\sum_{c=1}^{k_n}  \frac{k_n^{2c}}{c!}\,
    \frac{1}{(n-k_n)^{c}}\,
    \zeta^{(j)}_{k_n,k_n} \notag \\
  \leq\; &\sum_{c=1}^{k_n} \frac{1}{c!}
    \left(\frac{k_n^{2}}{n-k_n}\right)^{c}  M k_n^{\alpha}\notag \\
  \leq &\Bigl(\tfrac{k_n^{2}}{n-k_n} \exp\!\Bigl(\dfrac{k_n^{2}}{n-k_n}\Bigr)\Bigr) M k_n^\alpha. \notag    
\end{align}

\noindent
Therefore, 
\[
\textcolor{black}{\underset{1 \leq j \leq p}{\max}\,\mathrm{Var}\bigl(U_{nj} \bigr)
\leq\frac{Mk_n^{\alpha}}{m_n} + \Bigl(1 - \frac{1}{m_n}\Bigr) \Bigl(\dfrac{k_n^{2}}{n-k_n} \exp\!\Bigl(\dfrac{k_n^{2}}{n-k_n}\Bigr)\Bigr) M k_n^\alpha
 = O\Big(\frac{k_n^{2+\alpha}}{n}\Big).}
\]
This complete the proof of (1), (2) and (3).

\noindent
Now from parts (1), (2) and (3) we get
\begin{align}\label{gamma j}
     \underset{1 \leq j \leq p}{\max}\mathrm{Var}(\Gamma_{nj}) 
     &\leq O\Big(\frac{n}{m_n}\Big) + O\Big(\frac{k_n^{\alpha }}{n}\Big) + O\Big(\frac{k_n^{2+\alpha}}{n}\Big) = O(R_n)
\end{align}
where $R_n = \max \Big\{ \dfrac{k_n^{2+\alpha}}{n}, \dfrac{n}{m_n} \Big\}$.
Similarly we derive
\begin{equation} \label{gamma}
   \mathrm{Var}(\Gamma_n) = O(R_n).
\end{equation}
The proof follows the same line of argument as above, \textcolor{black}{by changing $\Omega_{mj} = (\omega_m, \pi_{mj})$ to $\omega_m$ for $m=1,..,m_n$.} 
Thus combining results in (\ref{gamma j}) and (\ref{gamma}), we get
\[
\underset{1 \leq j \leq p}{\max}\mathrm{Var}(\hat{\Lambda}_n^{(j)}) = O(R_n).
\]


\noindent
Finally, to prove $\mathrm{P}(\mathcal{D} \subseteq \widehat{\mathcal{D}}_n ) \to 1$, we will prove that $\mathrm{P}(\mathcal{D} \not\subseteq \widehat{\mathcal{D}}_n ) \to 0.$ Note that if $\mathcal{D} \not\subset \widehat{D}_n$, then by
Condition~\ref{threshold} and the selection rule in \eqref{D hat star} there exists some
$d\in \mathcal{D}$ with $\Lambda^{(d)} > c_0$ and $\hat{\Lambda}^{(d)}_{n} \leq c_0/2$ and thus, $|\hat{\Lambda}^{(d)}_{n} - \Lambda^{(d)}| \geq c_0/2$. Therefore,
\begin{align}
\mathrm{P}(\mathcal{D} \not\subset \widehat{\mathcal{D}}_n) \notag
&=  \mathrm{P}\!\left( \bigcup_{d\in \mathcal{D}} \Bigl\{\, d \notin \widehat{\mathcal{D}}_n \,\Bigr\} \right) \notag\\
&\le  \sum_{d \in \mathcal{D}} \,\mathrm{P}\!\left( \Bigl|\,\hat{\Lambda}^{(d)}_{n} - \Lambda^{(d)}\,\Bigr| \geq c_0/2 \right) \notag\\
&\leq \sum_{d \in \mathcal{D}} \frac{4\mathrm{Var}(\hat{\Lambda}_n^{(d)})}{c_0^2} = O\Bigl( \dfrac{\mathrm{Card}(\mathcal{D}) R_n}{c_0^2} \Bigr) \notag.
\end{align}
Thus,
$\mathrm{P}\!\left(\mathcal{D} \subseteq \widehat{\mathcal{D}}_n\right) \to 1$ if $\dfrac{\mathrm{Card}(\mathcal{D}) R_n}{c_0^2}\to 0.$


\end{proof}

\vskip 0.2in

\bibliographystyle{plainnat}
\bibliography{sample}

\end{document}